\newcommand{\remove}[1]{}
\newcommand{\ind}{\mathbf{1}}
\newcommand{\E}{\mathbf{E}}
\newcommand{\Pro}{\mathbf{P}}
\newtheorem{theorem}{Theorem}
\newtheorem{lemma}{Lemma}
\newtheorem{remark}{Remark}
\begin{document}

 \title{Cell planning for mobility management
        in heterogeneous cellular networks
 \thanks{Arpan Chattopadhyay and  Bart{\l}omiej B{\l}aszczyszyn are with Inria/ENS, Paris, France. Eitan Altman is 
 with Inria, Sophia-Antipolis, France. Email: arpan.chattopadhyay@inria.fr, Bartek.Blaszczyszyn@ens.fr,
 eitan.altman@sophia.inria.fr}
}

\author{
Arpan~Chattopadhyay, Bart{\l}omiej~B{\l}aszczyszyn, Eitan~Altman \\
}

\maketitle
\thispagestyle{empty}

\begin{abstract}
In small cell networks, high mobility of users 
results in frequent handoff and thus severely restricts the data rate for mobile users. 
To alleviate this problem, we
propose to use heterogeneous, two-tier  network structure 
where static users are served by both macro and micro base stations,
whereas the mobile (i.e., moving) users are served only by macro base stations having
larger cells; the idea is to prevent frequent data outage for mobile
users due to handoff.
We use the classical two-tier Poisson network model with different
transmit powers (cf~\cite{DHILLON2012}), assume independent Poisson process of
static users and doubly stochastic Poisson process of mobile users 
moving at a constant speed along infinite straight lines generated by
a Poisson line process.
Using stochastic geometry, we calculate the  average downlink {\em data
  rate}  of the typical static and mobile (i.e., moving) 
users, the latter accounted for handoff outage periods. We consider
also the average {\em throughput} of these two types of users defined as
their average data rates divided by the mean total number of users
co-served by the same base station. We find  that  if 
 the density of a homogeneous network and/or the speed of mobile users is high, 
it is advantageous to let the mobile users
connect only to some optimal fraction  of BSs to reduce the frequency
of handoffs during which  the connection is
not assured. If a heterogeneous structure of the  network is allowed,
 one can further jointly optimize the mean throughput of mobile and static  users by 
appropriately tuning the powers  of micro and macro base stations
subject to some aggregate power constraint ensuring unchanged  mean data rates of
static users via the network equivalence property (see \cite{netequivalence}).

\end{abstract}

\begin{keywords}
Heterogeneous network, small cells, Poisson routes, mobility,  handoff, cell planning,
spectral efficiency, throughput, resource allocation, stochastic geometry.
\end{keywords}

\section{Introduction}\label{section:introduction}
The proliferation of high-specification handheld/mobile devices such as smartphones and tablets has led to unprecedented 
growth in cellular traffic over the past few years, and is expected to grow continuously even further. These devices are 
often equipped with 3G or 4G communication capabilities, and are able to run applications such as video streaming or downloading, image or 
media file transfer via e-mail, social networking applications, and access to several cloud services. 
In order to meet the enormous bandwidth demand for these applications, the use of small cell networks 
(see \cite{andrews-etal12femtocells}) assisted by macro base stations (see \cite{pauli-etalYYhet-LTE-ICI-coordination}, 
\cite{stanze-weber13het-LTE-A}, \cite{nakamura-etal13trends-small-cell-LTE}, \cite{ishii-etal12novel-architecture-LTE-B}) have 
recently become popular; the major idea behind such heterogeneous network architecture is that the small cells 
(such as femtocells and picocells) can meet the bandwidth demand of the users, while the macro base stations provide cellular coverage.

Small cell networks (e.g., picocell networks) can provide high throughput to the static users,
but may significantly deteriorate the performance of mobile (i.e., moving) users.
Indeed, very high mobility of users in small cell networks 
result in frequent handoff at  cell boundaries, thereby resulting in 
potentially huge signaling overhead 
(see \cite{camp-etal02survey-mobility-models}, \cite{halepovic-williamson05mobility-cellular-data-network}, 
\cite{akyildiz-etal99mobility-management-next-generation-wireless})
among the base stations and mobile users. 
In case of hard handoff, the existing connection to the base station is terminated before the connection 
to a new base station is established. As a result of this and the signaling overhead, during 
some short time period for each hard handoff, 
the mobile user is not able to receive any data from the base station 
(\cite[Section~$3$]{kavitha-etal11spatial-queueing-picocells} says that a moving user requires some fixed 
amount of communication with the base stations 
for each handoff; this can be modeled as a temporary outage for the desired data transmission). This happens 
even if we assume that every hard handoff attempt is successful, i.e., the target cell can always accommodate a new 
handoff request.\footnote{LTE does not use any soft handoff.} 
In case of soft handoff, the mobile user maintains connection to more than one base stations during the 
handoff period (which is a significant amount of time), thereby reducing the spectral efficiency (apart 
from the throughput loss due to signaling overhead). 
As a consequence, if the mobility of users is very high and/or network
cells are too small, then a significant loss in average data-rate can
be experienced by the mobile users. 
A solution to the data outage problem (due to hard handoff)  
is predictive channel reservation as described in \cite{ye-etal06predictive-channel-reservation-handoff-prioritization}, where handoff 
requests are sent to adjacent cells when the mobile users reach within
a given distance of the cell boundary depending on rather complicated 
user location and velocity estimation~\footnote{GPS tracking of users is not allowed by the legislation  of some countries.}. 
This idea is useful in preventing 
the temporary data outage due to handoff, but it is unable to address the
problem of frequent handoff and consequently huge signaling overhead rate due to
high mobility of users.

As a solution to the above problem, we propose to use a heterogeneous network architecture, where only macro base stations can serve the 
mobile users; the relatively large cell size of the macro base stations result in a much smaller handoff rate in this architecture. However, 
macro base stations are less power-efficient as compared to micro base stations, and micro base stations 
provide high throughput to the static users. Clearly, there is a trade-off between the fraction of macro and micro base stations 
to be chosen by the network service provider, and also between the transmit power levels at which these two classes of base 
stations should operate. In this paper, under certain modeling assumptions, we provide a stochastic geometry framework which can be used to find the optimal 
fraction of macro base stations and the transmit power levels of macro and micro base stations, as a function of the densities of static and mobile users and the 
velocity of mobile users.
Our main findings  in this matter are as follows.
\begin{itemize}
\item In a homogeneous network scenario, if the product of the mobile
  user speed, handoff time and the square root of the network density  
is large then  it is  advantageous to let the mobile users
connect only to some fraction  of base stations to reduce the frequency
of handoffs during which their data-rate drops down. Obviously too
small such fraction  will result is low data rate 
due to large distance from base stations. Our model
allows us to calculate the optimal value of this fraction 
in function of mobile speed and other network parameters.
\item If a heterogeneous structure of the  network is allowed, it is
  possible to further  optimize  the  mean  throughput  of mobile and  static  users; 
  a good compromise between the performance of static and mobile users is obtained by appropriately choosing the transmit power levels and 
  densities of macro and micro base stations.
\end{itemize}

\subsection{Related Work}\label{subsection:related_work}
There has been a lot of work in the literature on the impact of mobility of users in wireless networks. In fact, the authors in 
\cite{grossglauser-tse02mobility-increases-capacity} have shown that mobility increases the capacity of ad-hoc networks. 
\cite{bansal-liu03capacity-delay-mobility-ahn} deals with the trade-off of delay and throughput in ad-hoc networks in presence of mobility. 
The authors of \cite{bonald-etal09flow-level-performance-mobility}, 
\cite{bonald-etal04mobility-flow-level-data-systems}, \cite{borst-etal12capacity-with-mobility},  
\cite{orlik-rappaport01handoff-arrival-process-cellular} discuss the impact of inter and intra cell mobility on capacity, flow level performance and 
the trade-off between throughput and fairness; these results show that mobility increases the capacity of cellular networks when base stations 
interact among themselves, since the cooperation among base stations allows the users to improve performance from multi-user diversity, 
opportunistic scheduling etc. 

However, in practice, handoff 
results in an outage in connection and high signaling overhead, which none of the mentioned references care about. 
As a result, researchers have recently focused on analyzing the impact of user mobility on call block and call drop probabilities and 
optimal cell size in picocell networks deployed {\em on a line}; 
see  \cite{ramanath-etal10spatial-queueing-mobility-wiopt}, \cite{kavitha-etal11spatial-queueing-picocells},  
\cite{ramanath-etal10mobility-small-cell}. 

Unlike these papers, we consider a macro-assisted small cell network 
{\em on the two-dimensional plane} where only macro base stations 
are allowed to serve the mobile users; our goal is to choose the network design parameters in such a way that the time-average of a linear combination 
of the rates of static and mobile users is maximized, depending on the handoff duration and user velocity. 
Handoff control by macro-assisted small cell networks has been proposed before (see \cite{lee-syu14handover-macro-assisted-small-cell-networks}, 
\cite{ishii-etal12novel-architecture-LTE-B}), but no in-depth mathematical analysis was provided that can be used as a guideline to choose 
network design parameters such as the density of macro base stations and transmit power levels of the two classes of base stations.

\subsection{Our Contribution}\label{subsection:our_contribution}
Our contributions in this paper are summarized as follows:
\begin{itemize}
\item In stochastic geometric analysis of Poisson multi-tier network, 
  Theorem~\ref{lemma:ccdf-sinr-tau-less-than-1} provides a simplified expression
  for the SIR coverage probability by one tier for the whole domain of
  SIR.  It is based on the explicit evaluation of the integral
  function $\mathcal{I}_{n,\beta}(\cdot)$
  defined in~\cite[Equation~$(13)$]{bartek-keeler15sinr-process-poisson-networks-factorial-moment-measures}
  and capturing  the impact of the noise, when this latter  has the distribution of a  Poisson-shot noise
 variable (here the interference from the non-serving
 tier).~\footnote{The numerical evaluation of the original integral
   $\mathcal{I}_{n,\beta}(\cdot)$ is quite tricky, as reported
   in~\cite{keeler-programs}.}
This expression is used to evaluate and optimize the (static and
mobile) user mean downlink bit-rates.
 \item We provide explicit or integral expressions for several
   primitive characteristics of the two-tier Poisson network with
   Poisson-line road system, including mean areas of different types
   of zero-cells (cells covering the origin; they are statistically
   larger than the respective typical cells), mean number of static
   and mobile users served in these cells, the intensity of cell boundary
   crossings (handoffs) for a user traveling along a  given line.
We believe these expressions 
 are new to the wireless literature.
\item Using the above expressions we evaluate and optimize (over the densities and transmit power levels of the macro and micro base stations) the
  mean throughput of static and mobile users, the latter accounted
  for handoff outage periods. The mean throughput of a typical (static
  or mobile) user is defined as the mean bit-rate of this user divided
  by the mean number of all users co-served by the same station.

\end{itemize}

\subsection{Organization}\label{subsection:organization}
The rest of the paper is organized as follows. A two-tier Poisson
network model with Poisson-line route system  for
stochastic geometric analysis of the performance of static and mobile
users is 
provided in Section~\ref{section:system_model_and_notation}. 
In Section~\ref{ss.metrics}, using this model,
we evaluate the mean data rate and
throughput  of the typical mobile and static user. 
Next, in
Section~\ref{section:optimal-design-of-the-network-using-stochastic-geometry}
we formulate and solve the problem of the optimal  network design
with respect to these performance metrics of static and mobile users. 
Finally, we conclude in Section~\ref{section:conclusion}. All proofs are provided in the appendix.

\section{Heterogeneous Poisson Network Model with Poisson Road System}\label{section:system_model_and_notation}

\subsection{System Assumptions}
In this paper, we consider  radio part of the downlink traffic
in a network consisting of two types of base stations (BSs) using constant
but possibly different powers. The coverage in the network is ensured
by {\em macro stations} which typically transmit with larger powers. 
{\em Micro stations}, which typically transmit with smaller powers, 
are used to meet the capacity request. 
Both types of BSs transmit over the entire available bandwidth.
The above assumptions are satisfied e.g. for practical cellular
network, such as~LTE~networks.

We consider also two types of users: static and mobile ones, the latter moving along some road system with constant
velocity. We think, for example, of a urban or suburban road system with  users sitting in
moderately fast moving cars and downloading contents from the base stations.
In general, all users are served by the respective  BSs received with the
strongest powers. However  mobiles users, which are  subject to handoff procedures when
changing the serving stations, are allowed not to connect to micro
BSs, the goal being to reduce the handoff frequency. 
During the handoff event, which takes some fixed time, the quality of
the mobile users connection drops down (in particular due to more
heavy signaling) or even suffers  a temporary
outage period. We assume that the applications being run in the user
ends are enough delay
tolerant and that they  resume perfectly once the communication with a new base
station is established immediately after handoff is over.

Regarding the radio channel assumptions, we consider interference
limited scenario, where the  peak data rate available at any location is limited by the
interference from the  base stations not serving this location. Again
this is a reasonable scenario for urban and suburban environment.

\subsection{Two Tier Network Model}\label{subsection:base-station}
We consider a heterogeneous network (hetnet) model composed of two tiers of
base stations (BSs) modeled by two independent Poisson point processes,
$\Phi_{macro}$ and $\Phi_{micro}$ of intensity $p\lambda_{BS}$
and $(1-p)\lambda_{BS}$, respectively, where $p\in[0,1]$ is the fraction of macro BSs.
Macro and micro BSs transmit with constant but possibly different powers 
$P_{macro}, P_{micro}$ respectively. It is natural to assume  $P_{macro}>P_{micro}$. 
These powers, together with $p$,  are network design parameters
subject to some network equivalence condition that will be explained in Section~\ref{ss.net-equivalence}. 
Note that the superposition  of two tiers
$\Phi:=\Phi_{macro}+\Phi_{micro}$ forms a Poisson point process of
intensity $\lambda_{BS}$ and $p$ is the fraction of BSs which are
macro stations. We will denote the  locations of BS in the two tiers by  $\Phi=\{X_i\}$, $X_i\in\mathbb{R}^2$,
with arbitrary, countable  labeling of BSs by index $i$.

\subsection{Static Users  and Mobile Users on the Routes}
\label{subsection:mobile-users}
We consider two classes of users. Locations of {\em static users (SU)} are modeled by points of a Poisson point process $\mathcal{U}_{static}$
of intensity $\lambda_{SU}$. {\em Mobiles users (MU)} are moving with
same constant speed $v$ on a system of directed routes  modeled by
straight lines of a directed homogeneous  Poisson line process
$\mathcal{L}$ on the plane, of intensity $\lambda_L$ that  corresponds
to the mean total length of routes per unit of surface, (see 
\cite[Chapter~$8$]{chiu-etal13stochastic-geometry-and-its-applications}).
We assume that, given a realization of the Poisson line process, at time instant $t=0$, MUs form a Poisson point process
$\mathcal{U}_{mobile}$ on $\mathcal{L}$ of intensity $\lambda_{MU}$ MUs per unit of 
route length.~\footnote{More formally, the process $\mathcal{U}_{mobile}$ is a doubly 
stochastic Poisson point process with random intensity $\lambda_{MU}\mathcal{L}$.}
This means, in particular,  that any two  successive MUs on any line
of $\mathcal{L}$ are separated by a distance having exponential
distribution with mean $1/\lambda_{MU}$. Moreover, constant mobility
of MUs implies that at any time instance $t$ the relative locations
(and hence  distribution) of MUs on the lines of $\mathcal{L}$ remain
unchanged. Also, one observes MUs crossing any point of a line 
according to a time homogeneous Poisson process with rate $\lambda_{MU} v$.
We assume that all processes $\Phi_{macro}$, $\Phi_{micro}$, $\mathcal{U}_{static}$ and  $\mathcal{U}_{mobile}$ are independent. 
A sketch of the network model with the both types of users is depicted in Figure~\ref{fig.hetnet-with-lines}.

\begin{figure}[t!]
\begin{center}
\begin{minipage}{1\linewidth}
\begin{center}
\centerline{\includegraphics[width=0.9\linewidth]{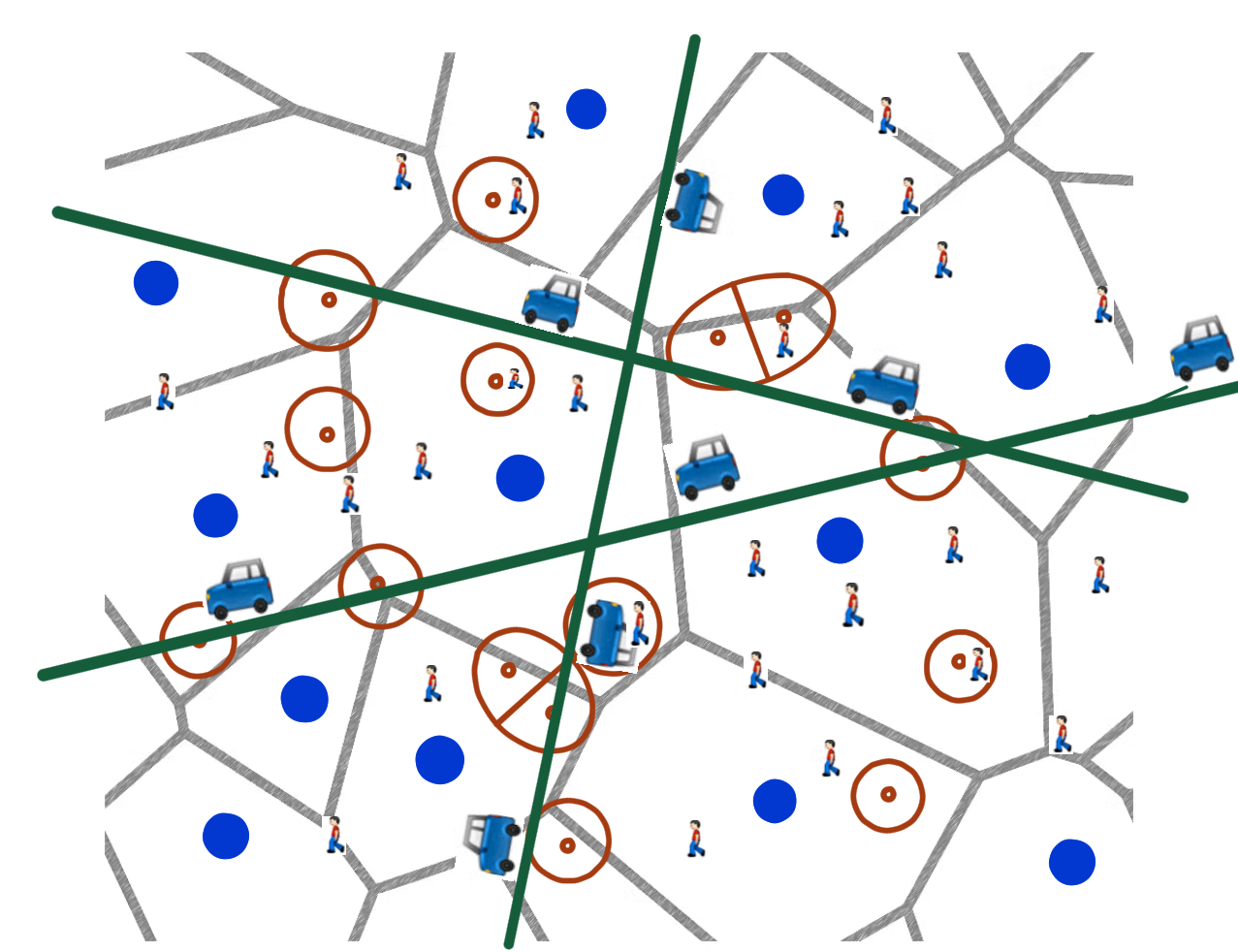}}
\vspace{-2ex}
\caption[Two-tier heterogeneous network with straight line route system]{Two-tier heterogeneous network with straight line route system.
Statics users are distributed everywhere, whereas mobile users are moving along the routes. The blue circles correspond to macro BSs, and the tiny 
dots correspond to micro BSs. The regions around the micro BSs correspond to the regions where a micro BS sends the strongest signal. 
The Voronoi tessellation is generated solely by the process $\Phi_{macro}$. Static users are shown using human symbols, and the mobile 
users (moving along straight lines) are shown using car symbols.
\label{fig.hetnet-with-lines}}
\end{center}
\end{minipage}
\end{center}
\vspace{-2ex}
\end{figure}

\subsection{Downlink Wireless Channel Model}\label{subsection:channel-model}
The path-loss at a distance $r$ from a  BS is given by $(Ar)^{\beta}$, where $A>0$ and $\beta>2$ are two constants. 
We ignore all  random propagation effects (fading, shadowing) in the channel. Moreover, 
we focus on the interference limited regime ignoring any thermal noise at the user receivers.

\subsection{Downlink Service Discipline}\label{subsection:service-discipline}
Static users are served by the respective  BSs in $\Phi$ which are received with the strongest power
~\footnote{With any tie-breaking rule; in fact the probability that a user receives the same power from two or more  different stations of $\Phi$ is null.}. 
This means that a static user located at $x$ on the plane  
is served by the station $X_i\in\Phi$ which maximizes
 the value 
$P_{i}(A|X_i-x|)^{-\beta}$, where $P_i=P_{macro}$ if
 $X_i\in\Phi_{macro}$ and $P_i=P_{micro}$ if
 $X_i\in\Phi_{micro}$. 
Mobile users are served only by macro  BSs in $\Phi_{macro}$, with the
choice of the serving station also based on the  stronger received
power. Due to constant power emitted by macro BSs, this is equivalent
to saying that mobile users are served by the respective closest macro BS.  

For any micro or macro BS, by its {\em hetnet cell} we call  the subset of the plane where this station serves static users.
For macro BSs, besides their  hetnet cells, we consider also {\em macro cells}; these are Voronoi cells generated by $\Phi_{macro}$ 
(ignoring $\Phi_{micro}$), where the macro BSs serve mobile users. Note that the hetnet cells generated by  the macro BSs are subsets of the respective macro (Voronoi) cells of these BSs. Also, if 
$P_{macro}>P_{micro}$ then the hetnet cells of macro BSs are statistically larger than the hetnet cells of the micro BSs.
This situation is depicted via Figure~\ref{fig.hetnet-with-lines}.

\subsection{Handoff in the Network}\label{subsection:handoff-model}
As a MU moves along a line $l \in \mathcal{L}$, it traverses across various macro cells 
(Voronoi cells generated by $\Phi_{macro}$).
On the macro cell boundaries, one macro BS has to hand over the MU to the BS of the neighboring cell;
this event is called handoff. We assume in this paper that handoff is always successful (we will discuss in the conclusion 
how the possibility of handoff failure due to overload in the target cell 
can be taken care of in our current framework). However, 
a significant signaling involved 
during the handoff seriously affects the (downlink) transmission rate during the handoff (see 
\cite[Section~$3$]{kavitha-etal11spatial-queueing-picocells}; a MU requires some fixed amount of signaling/overhead 
communication with the base stations 
for each handoff). In case of hard handoff, the existing connection to the base station is terminated before the connection 
to a new base station is established. As a result of this and the signaling overhead, during 
some short time period for each hard handoff, 
the mobile user is not able to receive any data from the base station. In case of soft handoff, 
the mobile user maintains connection to more than one base stations during the 
handoff period, thereby reducing the spectral efficiency (apart 
from the throughput loss due to signaling overhead). 
In order to account for the throughput loss due to handoff in a simplified way,  we assume that
during handoff, 
the MU is not able to receive any data from either of the two neighbouring  base stations for a constant time $T_h$.

Note that MUs do not stop, but keep moving with the usual speed $v$  during 
the handoff event. Again for simplicity, we assume that the segment of the line $l$  (of length $v T_h$) traversed during the handoff period 
is bisected by the traversed cell boundary.

Note that, if the density of macro BSs is high  (and therefore the macro cells sizes are small) with respect to the 
MU speed $v$, then frequent handoff events have serious detrimental effect on the average downlink  data rate of MUs. 
The reason that  we allow only the macro base stations to serve the MUs is to reduce handoff rate by using only large cells for MUs. 
The goal of the hetnet optimization in $p,P_{macro},P_{micro}$ considered 
in Section~\ref{section:optimal-design-of-the-network-using-stochastic-geometry} is to optimize the performance of the MUs 
while (at least) preserving the performance of the static users.

\subsection{Network Equivalence}
\label{ss.net-equivalence}
When optimizing the network design in $p,P_{macro},P_{micro}$, we will consider the following 
constraint 
\begin{equation}\label{e.hetnet-equivalence}
pP_{macro}^{2/\beta}+(1-p)P_{micro}^{2/ \beta}=P^{2/ \beta}\,,
\end{equation}
where $P$ is some given fixed transmission power.
Condition~\eqref{e.hetnet-equivalence} ensures that the interference field over $\mathbb{R}^2$ generated by the hetnet
will have the same marginal distributions as the homogeneous Poisson
network of density $\lambda_{BS}$, where each base station transmits
at fixed power $P$.  For details, see the notion of {\em equivalent homogeneous network} as 
explained in \cite{netequivalence,bartek-keeler15sinr-process-poisson-networks-factorial-moment-measures}. 
The equality of the marginal distributions means that all static users experience the same mean service characteristics
based on the collection of powers they receive from all macro and micro stations as 
in the equivalent homogeneous network.
\footnote{ Note that the constraint~\eqref{e.hetnet-equivalence} is  different from the 
constraint on the mean transmit power per base station $pP_{macro}+(1-p)P_{micro}=P$.}

\section{Performance Evaluation of the Heterogeneous Network}
\label{ss.metrics}
In what follows we shall evaluate the performance of mobile and static
users in our hetnet model. We consider mean  {\em bit rates} of
a single, typical SU or MU, the latter accounted for handoff outage events.
For both types of users, we also consider the {\em mean throughput}, which is defined as
the mean peak bit rate divided by the mean number of all (static and mobile)
users served by the station serving the typical user.
The above performance metrics are subject to network design
optimization in Section~\ref{section:optimal-design-of-the-network-using-stochastic-geometry}.

\subsection{Downlink Bit Rate of Mobile Users} 
Denote by $X^*$ the macro BS that  is closest to the origin; it is the 
BS serving the typical MU present at the origin of the plane.
Denote by  $\E[R_{macro}(0)]$ the {\em mean downlink (Shannon) bit-rate}\footnote{In fact it is the spectral efficiency, i.e., the bit-rate per unit of bandwidth.} at
  the origin from $X^*$, where  
$R_{macro}(0)=\log_2(1+\text{SIR}_{macro}(0))\,,$ 
with 
$$\text{SIR}_{macro}(0):=\frac{P_{macro}(A|X^*|)^{-\beta}}{\sum_{X_i\in\Phi, X_i\not=X^*}P_i(A|X_i|)^{-\beta}}\,,$$
where $P_i=P_{macro}$ if
 $X_i\in\Phi_{macro}$ and $P_i=P_{micro}$ if
 $X_i\in\Phi_{micro}$.
Observe that MUs treat the power received  form micro BSs  as noise.
We have 
\begin{equation}\label{e.ER}
\E[R_{macro}(0)]=
\int_{0}^{\infty}\Pro\{\text{SIR}_{macro}(0)>(2^{t}-1)\}\,dt
\end{equation}
and the distribution of $\text{SIR}_{macro}(0)$ 
is the subject of the following result.
\begin{theorem}\label{lemma:ccdf-sinr-tau-less-than-1}
 For any $\tau>0$, we have:
  \footnotesize
 \begin{eqnarray*}
  && \Pro(\text{SIR}_{macro}(0)>\tau)\\
  &=& \sum_{n=1}^{\lceil{\frac{1}{\tau}}\rceil}(-1)^{n-1}\bigg( \frac{\tau}{1-(n-1)\tau} \bigg)^{-\frac{2n}{\beta}} \mathcal{J}_{n, \beta} \bigg( \frac{\tau}{1-(n-1)\tau} \bigg) \\
  && \times \frac{\beta}{2} \times \bigg( \frac{2}{\beta \Gamma(1-\frac{2}{\beta}) \Gamma(1+\frac{2}{\beta})} \bigg)^n  \bigg( 1+\frac{(1-p)P_{micro}^{\frac{2}{\beta}}}{p P_{macro}^{\frac{2}{\beta}}} \bigg)^{-n}
 \end{eqnarray*}
 \normalsize
where $\mathcal{J}_{n,\beta}(x)$ is defined in Appendix~\ref
{appendix}, Subsection~\ref{subsection:expression-for-Jn} 
(taken from \cite[Equation~$(16)$]{bartek-keeler15sinr-process-poisson-networks-factorial-moment-measures} 
with $x_1=x_2=\cdots=x_n=x$).
In particular, for $\tau\ge 1$, we have:

\footnotesize
$$\Pro (\text{SIR}_{macro}(0)>\tau)=\bigg[ \tau^{\frac{2}{\beta}} \Gamma(1-\frac{2}{\beta}) \Gamma(1+\frac{2}{\beta}) 
 \bigg( 1+\frac{(1-p)P_{micro}^{\frac{2}{\beta}}}{p
   P_{macro}^{\frac{2}{\beta}}} \bigg)  \bigg]^{-1}.$$
\normalsize
\end{theorem}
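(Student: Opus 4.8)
The plan is to transport the problem to the received-power point process on the half-line and then invoke the factorial-moment expansion of Bł\-aszczyszyn--Keeler. First I would apply the mapping theorem to the superposed field $\Phi=\Phi_{macro}+\Phi_{micro}$: writing $S_i=P_i(A|X_i|)^{-\beta}$ for the received powers, the set $\{S_i\}$ is a Poisson process on $(0,\infty)$ whose mean number of points above level $s$ equals $\tfrac{\pi\lambda_{BS}}{A^2}\big(pP_{macro}^{2/\beta}+(1-p)P_{micro}^{2/\beta}\big)s^{-2/\beta}$, a power law with exponent $2/\beta$. The decisive simplification --- and the reason the two tiers collapse so cleanly --- is that both tiers share the same exponent $\beta$, so the conditional probability that a point at level $s$ comes from $\Phi_{macro}$ is the level-independent constant $q:=\big(1+\tfrac{(1-p)P_{micro}^{2/\beta}}{pP_{macro}^{2/\beta}}\big)^{-1}$. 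Hence the received-power process carries i.i.d. macro/micro marks, and selecting $n$ macro points will contribute exactly the factor $q^n$ appearing in the statement.

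Next I would rewrite the coverage event. With $I:=\sum_i S_i$ the total received power and $S^*=P_{macro}(A|X^*|)^{-\beta}$ the power of the serving (strongest) macro station, we have $\text{SIR}_{macro}(0)>\tau$ iff $S^*>\tau\,(I-S^*)$, equivalently $S^*>\tfrac{\tau}{1+\tau}I$. Because $s\mapsto s/(I-s)$ is increasing, the strongest macro point satisfies this inequality iff at least one macro point does; thus $\{\text{SIR}_{macro}(0)>\tau\}$ equals the event that some macro point ``covers,'' i.e. has its own power-to-rest ratio above $\tau$. Summing the $n$ inequalities $S_{i_j}>\tfrac{\tau}{1+\tau}I$ and using $\sum_i S_i=I$ shows that at most $\lceil 1/\tau\rceil$ points can cover simultaneously, which fixes the upper limit of the sum. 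I would then apply finite inclusion--exclusion to $\ind(\text{at least one macro covers})$, truncated at $n=\lceil1/\tau\rceil$.

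I would then evaluate each inclusion--exclusion term by the multivariate Mecke/Campbell formula for the marked Poisson process: the expected number of ordered $n$-tuples of distinct macro points that all cover becomes $q^n$ times an $n$-fold integral of the power-law intensity against the Poisson shot-noise expectation generated by the remaining points (which, crucially, still comprises \emph{both} tiers). Summing the $n$ defining inequalities $S_{i_j}>\tau\big(\sum_{l\ne j}S_{i_l}+W\big)$ shows that each selected signal must exceed $\tfrac{\tau}{1-(n-1)\tau}$ times the external shot noise $W$; this is precisely the change of variables that turns the integral into the Bł\-aszczyszyn--Keeler object $\mathcal{I}_{n,\beta}$ evaluated at the common argument $\tfrac{\tau}{1-(n-1)\tau}$, while the power-law scaling extracts the prefactor $\big(\tfrac{\tau}{1-(n-1)\tau}\big)^{-2n/\beta}$ together with $\tfrac{\beta}{2}\big(\tfrac{2}{\beta\Gamma(1-2/\beta)\Gamma(1+2/\beta)}\big)^{n}$. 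Substituting the closed form of their Equation~(16), i.e. $\mathcal{J}_{n,\beta}$, into each term yields the claimed series; the restriction $n\le\lceil1/\tau\rceil$ keeps $1-(n-1)\tau>0$, so every argument is finite and positive.

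For the special case $\tau\ge1$ only the $n=1$ term survives ($\lceil1/\tau\rceil=1$), and it collapses to the classical coverage probability of the strongest signal in an interference-limited power-law Poisson field, $\big[\tau^{2/\beta}\Gamma(1-\tfrac{2}{\beta})\Gamma(1+\tfrac{2}{\beta})\big]^{-1}$, scaled by $q$; this gives the stated explicit expression. I expect the main obstacle to be the bookkeeping of the reduction to the single-tier B--K integral: one must verify that restricting the \emph{serving} candidates to macro points (yielding $q^n$) is compatible with letting the shot noise $W$ run over \emph{both} tiers, and that the normalizing constants and the threshold transformation $\tau\mapsto\tfrac{\tau}{1-(n-1)\tau}$ match their $\mathcal{I}_{n,\beta},\mathcal{J}_{n,\beta}$ conventions exactly.
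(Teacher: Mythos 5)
Your proposal is correct and reaches the stated formula, but it is organized genuinely differently from the paper's proof. The paper conditions on the micro-tier interference $I_{micro}$, cites \cite[Corollary~19]{bartek-keeler15sinr-process-poisson-networks-factorial-moment-measures} for a single (macro) tier with external noise, obtaining $\Pro(\text{SIR}_{macro}(0)>\tau\mid I_{micro})=\sum_{n}(-1)^{n-1}\tau_n^{-2n/\beta}\mathcal{J}_{n,\beta}(\tau_n)\,\mathcal{I}_{n,\beta}(I_{micro}a^{-\beta/2})$ with $\tau_n=\tau/(1-(n-1)\tau)$, and then unconditions by evaluating $\E[\mathcal{I}_{n,\beta}(I_{micro}a^{-\beta/2})]$ through the Laplace transform of the micro shot noise; the tier-mixing factor $\bigl(1+(1-p)P_{micro}^{2/\beta}/(pP_{macro}^{2/\beta})\bigr)^{-n}$ emerges \emph{analytically} from the resulting integral $\int_0^\infty u^{2n-1}e^{-cu^2}\,du$. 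You never condition: you project to the received-power process, observe that the equal path-loss exponents make the macro/micro labels i.i.d.\ marks with level-independent probability $q$, run the truncated inclusion--exclusion over macro-marked covering points via the Mecke formula, and pull $q^n$ out \emph{combinatorially}; what remains is the universal no-noise single-tier expansion term. Your route buys transparency --- each term of the homogeneous-network expansion is simply scaled by $q^n$, which is exactly the content of Remark~\ref{rem.SIR-equivalence} --- and it dispenses with the Laplace-transform computation; the paper's route buys direct reuse of a citable corollary and isolates what the authors advertise as a standalone contribution, namely the closed-form evaluation of $\mathcal{I}_{n,\beta}(\cdot)$ averaged over a Poisson shot-noise argument.

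Two slips in your write-up, neither fatal. First, it is $\mathcal{J}_{n,\beta}$, not $\mathcal{I}_{n,\beta}$, that takes the argument $\tau_n$; in your formulation the noise integral appears at zero argument, $\mathcal{I}_{n,\beta}(0)=\frac{\beta}{2}\bigl(\frac{2}{\beta\Gamma(1-2/\beta)\Gamma(1+2/\beta)}\bigr)^n$, because the shot noise of the remaining two-tier process plays the role of the same-process interference of B{\l}aszczyszyn--Keeler (equivalently, its Laplace transform under full-process scaling is exactly their $e^{-u^2}$ factor), and this constant is precisely the prefactor in the theorem. Second, the compatibility concern you raise at the end is real and is resolved only by consistent bookkeeping: the $n$-fold Mecke integral must be taken against the \emph{full}-process power-law intensity (so that the macro intensity is $q$ times it) with the remaining interference also scaled by the full-process constant. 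If instead you scaled by the macro-only constant $a$ while keeping the marking factor $q^n$, the micro tier would be counted twice; conversely, scaling by $a$ and treating only $I_{micro}$ as external noise reproduces, term by term, the paper's own computation.
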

\begin{proof}
 See Appendix~\ref
{appendix}, Subsection~\ref{subsection:proof-of-lemma-ccdf-sinr-tau-less-than-1}.
\end{proof}

\begin{remark}\label{rem.SIR-equivalence}
\begin{enumerate}
\item Under the equivalent network condition~\eqref{e.hetnet-equivalence}, we have: 
$$1+\frac{(1-p)P_{micro}^{\frac{2}{\beta}}}{p P_{macro}^{\frac{2}{\beta}}}=\frac{P^{\frac{2}{\beta}}}{p P_{macro}^{\frac{2}{\beta}}}.$$ 
\item In case of homogeneous network $P_{macro}=P_{micro}=P$ with
  only a fraction $p$ of stations potentially  serving the origin,
  the above quantity becomes equal to~$1/p$. Then, for $\tau \geq 1$, the 
probability $\Pro (\text{SIR}>\tau)$ increases linearly in $p$. 
\item Further assuming $p=1$ one obtains the coverage probability in
  the so called {\em equivalent homogeneous network}. It is equal to the coverage probability of
  the typical static user connecting to the strongest station (macro
  or micro) in the  hetnet, cf. Section~\ref{ss.SU-throughput}.
\end{enumerate}
\end{remark}

\begin{remark}\label{rem:ergodicity-along-a-line-without-handoff-for-a-single-user}
By the ergodicity of the model, the expectation $\E[R_{macro}(0)]$ is equal almost surely to the
sample average data rate along  any line of the Poisson line process $\mathcal{L}$.
 See Appendix~\ref{appendix},
 Subsection~\ref{subsection:proof-of-ergodicity-theorem} for more  explanations.
\end{remark}

\subsection{Accounting for handoff}
\label{ss.rate-handoff}
Note that $\E[R_{macro}(0)]$ does not account for the handoff   outage
  events. An exact way of taking into account this latter phenomenon
  would require calculating $\E^0_{MU}[R_{macro}(0)\ind(0\ \text{not in handoff})]$ 
  (where $\E^0_{MU}$ is the expectation w.r.t. the Palm probability that a mobile user is located at the origin), 
  which is not amenable to explicit analysis, in particular because of
the  dependence between $R_{macro}(0)$ and the event $\{0\ \text{not in
    handoff}\}$. Regarding the  handoff probability we have the
  following bound that involves the  intensity  
$$\lambda_c=\frac{4 \sqrt{\lambda_{BS}p}}{\pi}$$
of crossings of a fixed
 straight line with the boundaries of the macro 
 cells, which are Voronoi cells of $\Phi_{macro}$,
 cf~\cite[Equations~$5.7.4$ with $m=2$]{okabe99spatial-tesselations}
~\footnote{Since $\Phi_{macro}$ is motion invariant, $\lambda_c$ is invariant with respect to the choice of the fixed
  line.}.
\begin{lemma}\label{lemma:lower-bound-on-fraction-of-nonoutage-length-over-real-line}
$\Pro_{MU}^0\{0\ \text{not in handoff}\}\ge (1-\lambda_c vT_h)$.
\end{lemma}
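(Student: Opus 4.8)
The plan is to reduce the Palm statement about the typical mobile user to a coverage statement about a deterministic point on a typical route, and then to control that coverage probability by a first-moment (union) bound.

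First I would record the geometric description of the handoff region. Fix a line $l\in\mathcal{L}$ and consider its intersection points with the boundaries of the macro (Voronoi) cells of $\Phi_{macro}$. By the modeling assumption that each handoff segment (of length $vT_h$) is bisected by the traversed cell boundary, a point of $l$ is in handoff precisely when it lies in the union of the segments $[c-vT_h/2,\,c+vT_h/2]$ taken over the crossing points $c$. These crossing points form a stationary point process on $l$ whose intensity is exactly $\lambda_c=4\sqrt{\lambda_{BS}p}/\pi$, and by the motion invariance of $\Phi_{macro}$ this intensity does not depend on the (random) orientation of $l$.

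Second, I would invoke independence together with Slivnyak's theorem. Since $\mathcal{U}_{mobile}$ is a doubly stochastic Poisson process on $\mathcal{L}$ that is independent of $\Phi_{macro}$, conditioning on a mobile user at the origin (the Palm distribution $\Pro^0_{MU}$) does not alter the law of the cell geometry seen along the route through the origin. Hence the handoff indicator may be evaluated at a fixed location of the typical route, giving
$$\Pro^0_{MU}\{0\ \text{not in handoff}\}=\Pro\{0\ \text{not covered by any handoff segment on the route through }0\}.$$
Then I would bound the complementary event by a first-moment argument: writing $N$ for the number of boundary crossings lying within distance $vT_h/2$ of the origin along the route, the origin is in handoff only if $N\ge 1$, so by Markov's inequality and Campbell's formula
$$\Pro\{0\ \text{in handoff}\}\le\Pro\{N\ge 1\}\le\E[N]=\lambda_c\,vT_h,$$
since the expected number of crossings in an interval of length $vT_h$ equals $\lambda_c vT_h$. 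Taking complements yields the claimed bound.

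The inequality (rather than an equality) is exactly the slack in the union bound: overlapping handoff segments are counted multiply in $\E[N]$, which is why the true non-handoff fraction can only be larger than $1-\lambda_c vT_h$. The step I expect to require the most care is the Palm-to-deterministic-point reduction, where one must argue carefully that the independence of $\mathcal{U}_{mobile}$ from $\Phi_{macro}$ and the Poisson structure along each line let us apply Slivnyak to drop the added point; the motion invariance of $\Phi_{macro}$ is then what guarantees that the single constant $\lambda_c$ governs the crossing intensity regardless of the route direction.
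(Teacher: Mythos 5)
Your proof is correct and is essentially the paper's own argument made rigorous: the paper reasons informally that handoff segments of length $vT_h$, occurring at intensity $\lambda_c$ along the route, can cover at most a fraction $\lambda_c vT_h$ of the line, with overlapping segments explaining why the bound is only an inequality --- exactly the slack you identify in the Markov/Campbell bound. The only difference is that you additionally spell out the Palm-to-stationary reduction (using the independence of $\mathcal{U}_{mobile}$, $\mathcal{L}$ and $\Phi_{macro}$, and the motion invariance of $\Phi_{macro}$ so that $\lambda_c$ is orientation-free), a step the paper leaves implicit.
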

\begin{proof}
See
Appendix~\ref{appendix}, 
Subsection~\ref{subsection:proof-of-lower-bound-on-fraction-of-nonoutage-length-over-real-line}.
\end{proof}
The above bound is meaningful only for $\lambda_cvT_h$ smaller than~1
and  tight when it is close to 0. With the above precautions, for the
sake of analytical tractability, we 
will consider the product 
$(1-\lambda_c vT_h)\E[R_{macro}(0)]$  
as a substitute for the typical MU bit-rate accounted for handoff outage.  

\subsection{Accounting for Other Users --- Mean Throughput of MUs} 
\begin{figure}[t!]
\begin{center}
\begin{minipage}{1\linewidth}
\begin{center}
\centerline{\includegraphics[width=0.4\linewidth]{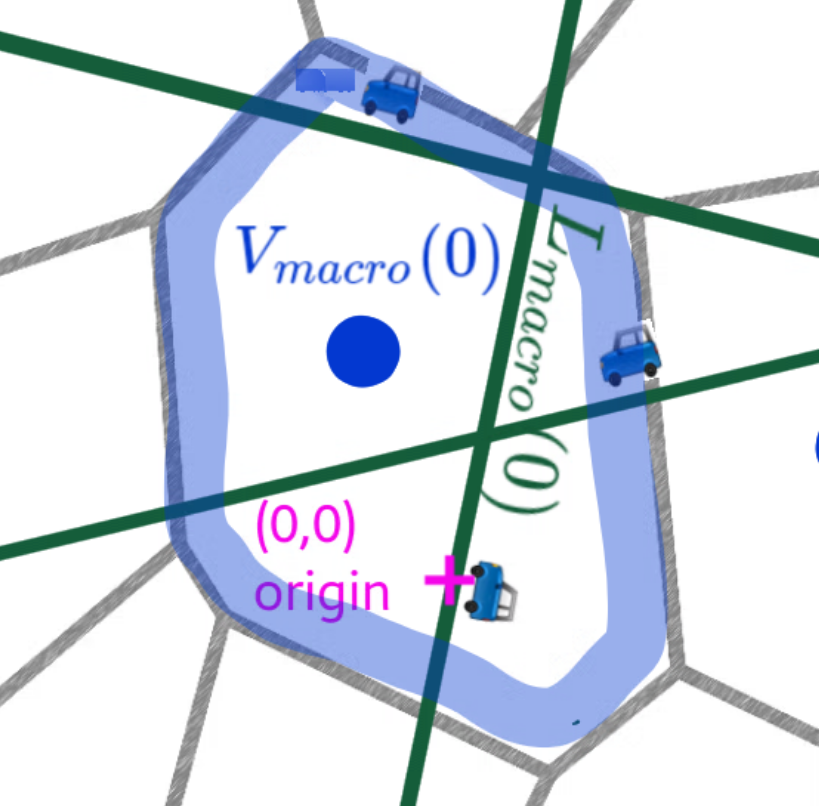}
\hspace{0.03\linewidth}
\includegraphics[width=0.4\linewidth]{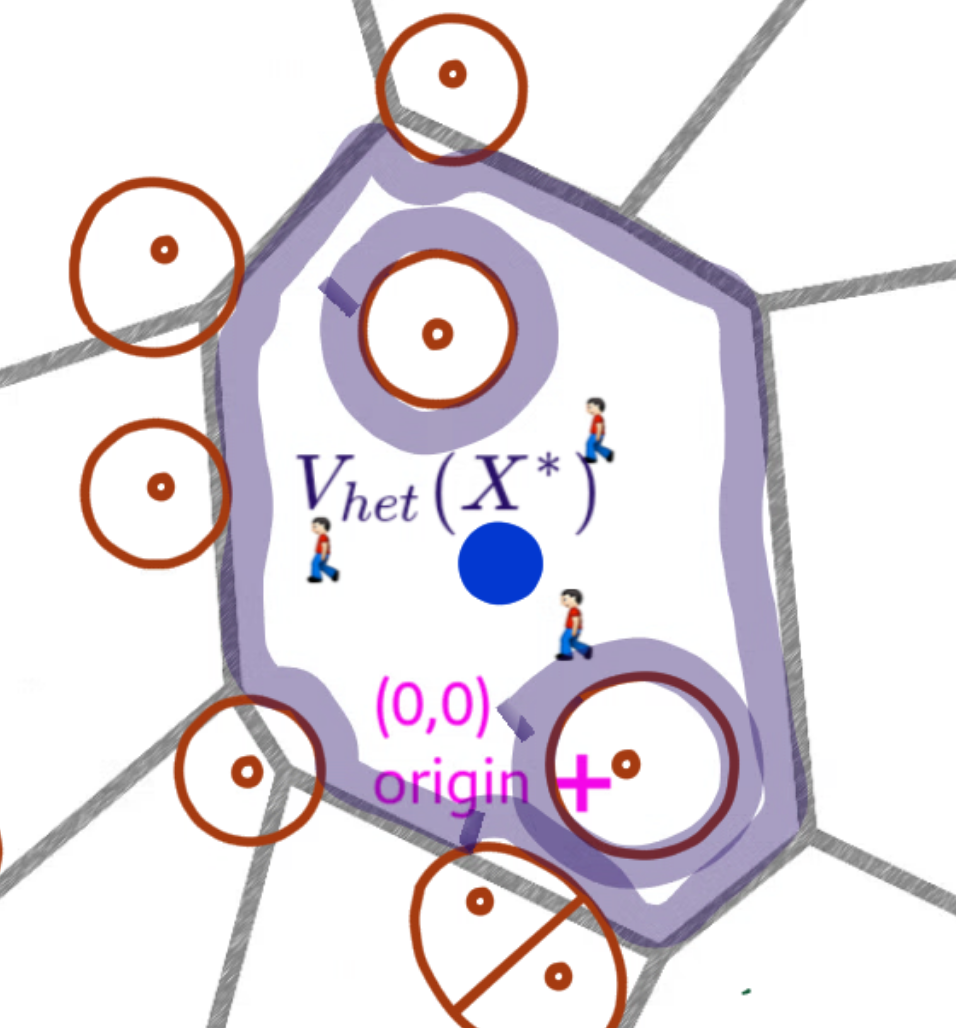}}
\vspace{-2ex}
\caption[Zero macro cell and its zero hetnet subset]{Left: The  macro BS $X^*$ closest to the origin and its macro-cell,
called zero-macro cell. Right: the hetnet cell of $X^*$. Note it is not necessarily the hetnet cell covering the origin.
\label{fig:hetnet-zero-macro-cel}}
\end{center}
\end{minipage}
\end{center}
\end{figure}

The expression $(1-\lambda_c vT_h)\E[R_{macro}(0)]$ is only the mean  bit-rate of a
single MU served by $X^*$  and  does not account
for the fact that $X^*$ needs to share its resources with other MUs
and SUs. 

In order to account for the rate sharing with other users, denote 
by $V_{macro}(0)$  the macro cell of $X^*$, and by $|V_{macro}(0)|$ the area of this macro cell. This is the
macro cell (generated by $\Phi_{macro}$ alone)  covering the origin; cf Figure~\ref{fig:hetnet-zero-macro-cel}~left. Following standard stochastic
geometric terminology we call it  {\em zero macro  cell}.
By $\overline N_{MU,macro}:=\E^0_{MU}[| \mathcal{U}_{mobile}(V_{macro}(0)) |]$
we denote the mean number of MUs present in this zero macro cell 
under the Palm distribution for MUs (i.e., given the typical MU at
the origin).

Moreover, let $V_{het}(X^*)$  be the {\em hetnet cell of $X^*$};
cf Figure~\ref{fig:hetnet-zero-macro-cel}~right.
Note, it is not necessarily the hetnet cell covering the origin. This is the region where all SUs receive the service from the 
macro BS $X^*$ serving the typical MU present at the origin.
Let $\overline N_{SU,macro}:=\E^0_{MU}[| \mathcal{U}_{static}(V_{het}(X^*))|]$ be the mean number of SUs 
present in the  hetnet cell of $X^*$  under the Palm distribution for
MUs. Note that, by the independence of 
$\mathcal{U}_{mobile}$, $\mathcal{U}_{static}$, $\Phi_{macro}$ and $\Phi_{micro}$, 
we can replace $\E^0_{MU}$ here simply by $\E$.
 
We have the following results regarding the two mean number of users.

\begin{theorem}\label{theorem:mean-number-of-MU-served-by-a-macro-BS}
 The mean number of MUs served by the macro BS serving a typical MU located at the origin is given by: 
 $\overline{N}_{MU,macro}=1+\bigg( \frac{1.2802 \lambda_L}{\lambda_{BS}p} + \frac{3.216 }{\pi \sqrt{\lambda_{BS}p}} \bigg) \lambda_{MU}$.
\end{theorem}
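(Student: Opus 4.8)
The plan is to evaluate $\overline N_{MU,macro}=\E^0_{MU}[\,|\mathcal{U}_{mobile}(V_{macro}(0))|\,]$ by Palm calculus for the doubly stochastic (Cox) process $\mathcal{U}_{mobile}$ driven by the length measure $\lambda_{MU}\mathcal{L}$. First I would apply Slivnyak's theorem for this Cox process: under $\Pro_{MU}^0$ the typical MU sits at the origin, it lies on one extra line $\ell_0$ through the origin (the length-measure Palm of a stationary Poisson line process adds a single line through the origin, oriented according to the directional distribution of $\mathcal{L}$), on $\ell_0$ the remaining MUs form an independent Poisson process of linear intensity $\lambda_{MU}$, and the lines of $\mathcal{L}$ together with their Cox MUs are left unchanged. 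Since $\Phi_{macro}$ is independent of all user and line processes, $V_{macro}(0)$ is distributed as the zero cell (the cell covering the origin) of a planar Poisson--Voronoi tessellation of intensity $\mu:=p\lambda_{BS}$, and I may replace $\E_{MU}^0$ by $\E$ throughout.

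Next I would split the count according to the line carrying each MU:
\[
\overline N_{MU,macro}=1+\lambda_{MU}\,\E[\Lambda_0]+\lambda_{MU}\lambda_L\,\E[\,|V_{macro}(0)|\,],
\]
where $\Lambda_0$ denotes the length of $\ell_0\cap V_{macro}(0)$. Here the $1$ is the typical MU at the origin; the middle term counts the remaining Poisson MUs of $\ell_0$ lying in the (convex) zero cell, i.e.\ $\lambda_{MU}$ times the mean length of $\ell_0\cap V_{macro}(0)$; and the last term counts the Cox MUs on the other lines, for which, conditioning on $V_{macro}(0)$ and using that the length measure of $\mathcal{L}$ has spatial intensity $\lambda_L$, the mean total length inside $V_{macro}(0)$ equals $\lambda_L\,|V_{macro}(0)|$.

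For the last term I would invoke the first moment of the Poisson--Voronoi zero-cell area: since the zero cell is the area-biased typical cell, $\E[\,|V_{macro}(0)|\,]=\E[A^2]/\E[A]=\mu^2\E[A^2]/\mu$, and the known normalized second moment $\mu^2\E[A^2]\approx 1.2802$ gives $\E[\,|V_{macro}(0)|\,]=1.2802/(p\lambda_{BS})$, producing the term $1.2802\,\lambda_L\lambda_{MU}/(p\lambda_{BS})$. For the middle term, because the zero cell is convex and the origin is interior to it, $\ell_0\cap V_{macro}(0)$ is exactly the segment of the one-dimensional tessellation induced on $\ell_0$ that contains the origin. As the origin is a fixed (hence stationary) point of $\ell_0$ and the induced crossing process on $\ell_0$ has intensity $\lambda_c=4\sqrt{\mu}/\pi$, this segment is the length-biased typical section-chord, whence $\E[\Lambda_0]=\E[L^2]/\E[L]=\lambda_c\,\E[L^2]$, where $L$ is the length of a typical chord obtained by sectioning the tessellation with a line.

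The hard part will be the evaluation of $\E[L^2]$ (equivalently the mean chord through a fixed interior point) for the line-section of a planar Poisson--Voronoi tessellation: unlike the cell-area moments it does not reduce to elementary quantities and must be obtained from the distributional results for such line sections, as tabulated in \cite{okabe99spatial-tesselations}. This yields $\lambda_c\,\E[L^2]=3.216/(\pi\sqrt{\mu})$, so that the middle term equals $3.216\,\lambda_{MU}/(\pi\sqrt{p\lambda_{BS}})$. Assembling the three contributions and substituting $\mu=p\lambda_{BS}$ gives the claimed expression for $\overline N_{MU,macro}$.
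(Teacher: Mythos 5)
Your proposal is correct and follows essentially the same route as the paper's proof: the same Palm/Slivnyak decomposition into the typical user, the MUs on its own line, and the MUs on the independent line process, with the zero-cell area obtained by size-biasing ($\E[A^2]/\E[A]=1.2802/(\lambda_{BS}p)$) and the chord through the origin obtained by length-biasing the section chords of intensity $\lambda_c=4\sqrt{\lambda_{BS}p}/\pi$, using the tabulated second moment ($0.804/(\lambda_{BS}p)$, hence $3.216/(\pi\sqrt{\lambda_{BS}p})$) from Okabe et al. The only cosmetic difference is terminology: you phrase the two steps as area/length biasing, while the paper invokes the inverse formula of Palm calculus, which is the same identity.
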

\begin{proof}
See Appendix~\ref{appendix}, 
Subsection~\ref{subsection:proof-of-mean-number-of-MU-theorem}. 
\end{proof}
For  any two points $(x_1,x_2)$ and $(y_1,y_2)$ on $\mathbb{R}^2$, we define 
$r_0:=(\frac{P_{micro}}{P_{macro}})^{\frac{1}{\beta}}\sqrt{y_1^2+y_2^2}$
and
$\mathcal{A}((x_1,x_2),(y_1,y_2))$ as the area of the union of two circles 
 with centers at $(x_1,x_2)$ and $(y_1,y_2)$ and radii
 $\sqrt{x_1^2+x_2^2}$ and $\sqrt{y_1^2+y_2^2}$ respectively.  
\begin{theorem}\label{theorem:mean-number-of-SU-served-by-a-macro-BS}
The mean number of SUs served by the macro BS serving a typical MU located at the origin is given by:  
\begin{eqnarray*}
&& \overline{N}_{SU, macro}\\ 
 &=& \lambda_{SU} \lambda_{BS} p \int_{(x_1,x_2) \in \mathbb{R}^2}\int_{(y_1,y_2) \in \mathbb{R}^2} \\
 && e^{-\lambda_{BS}p\mathcal{A}((x_1,x_2),(y_1,y_2)) - \lambda_{BS}(1-p) \pi r_0^2} dx_1 dx_2 dy_1 dy_2
\end{eqnarray*}
\end{theorem}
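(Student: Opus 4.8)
The plan is to compute $\overline{N}_{SU,macro}=\E[|\mathcal{U}_{static}(V_{het}(X^*))|]$ (the Palm expectation $\E^0_{MU}$ having already been reduced to $\E$ by the stated independence of the four processes) by first integrating out the static-user process and then carrying out a Palm computation on $\Phi_{macro}$. Since $\mathcal{U}_{static}$ is an independent Poisson process of intensity $\lambda_{SU}$, Campbell's formula gives
$$\overline{N}_{SU,macro}=\lambda_{SU}\int_{\mathbb{R}^2}\Pro\{z\in V_{het}(X^*)\}\,dz,$$
so the task reduces to evaluating the probability that a fixed location $z$ is served, in the hetnet, by the \emph{same} macro station $X^*$ that is nearest to the origin.

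First I would make the serving rule explicit. Maximizing $P_i(A|X_i-z|)^{-\beta}$ is equivalent to minimizing the weighted distance $|X_i-z|\,P_i^{-1/\beta}$, so, writing $\rho:=(P_{micro}/P_{macro})^{1/\beta}$ and letting $B(c,r)$ denote the disk of radius $r$ centred at $c$, the event $\{z\in V_{het}(X^*)\}$ is the conjunction of three conditions on a candidate macro point $y$: (a) $y$ is the nearest macro to the origin, i.e.\ no other macro point lies in $B(0,|y|)$; (b) $y$ is the nearest macro to $z$, i.e.\ no other macro point lies in $B(z,|z-y|)$; and (c) no micro point lies in $B(z,\rho|z-y|)$, so that no micro station beats $y$ at $z$. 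The key structural point is that $X^*$ plays a dual role (nearest to $0$ and serving $z$), so the event cannot be treated by conditioning on a deterministic station.

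To handle this I would apply the Campbell--Mecke formula together with Slivnyak's theorem to $\Phi_{macro}$: summing over which atom $y\in\Phi_{macro}$ equals $X^*$ turns the probability into $p\lambda_{BS}\int_{\mathbb{R}^2}(\cdots)\,dy$, where under the reduced Palm distribution the remaining macro process is again a PPP of intensity $p\lambda_{BS}$. Conditions (a) and (b) then merge into a single void event over the union $B(0,|y|)\cup B(z,|z-y|)$, while by the independence of $\Phi_{macro}$ and $\Phi_{micro}$ condition (c) contributes an independent factor. The Poisson void probabilities yield
$$\Pro\{z\in V_{het}(X^*)\}=p\lambda_{BS}\int_{\mathbb{R}^2}e^{-p\lambda_{BS}\,|B(0,|y|)\cup B(z,|z-y|)|}\,e^{-(1-p)\lambda_{BS}\pi\rho^2|z-y|^2}\,dy.$$

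Finally I would recover the stated form by the linear change of variables $(x_1,x_2):=-y$ and $(y_1,y_2):=z-y$, whose Jacobian has absolute value $1$, so that $dz\,dy=dx_1\,dx_2\,dy_1\,dy_2$. Under this substitution $|z-y|=\sqrt{y_1^2+y_2^2}$ gives the micro factor $\exp(-(1-p)\lambda_{BS}\pi r_0^2)$ with $r_0=\rho\sqrt{y_1^2+y_2^2}$, while translating by $+y$ identifies $|B(0,|y|)\cup B(z,|z-y|)|$ with the area of the union of the disk centred at $(x_1,x_2)$ of radius $\sqrt{x_1^2+x_2^2}$ and the disk centred at $(y_1,y_2)$ of radius $\sqrt{y_1^2+y_2^2}$, i.e.\ exactly $\mathcal{A}((x_1,x_2),(y_1,y_2))$. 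Substituting back into the Campbell integral produces the claimed expression. I expect the main obstacle to be the correct Palm bookkeeping in the dual-role step (a)--(b), namely keeping track of which region the remaining macro atoms must avoid, rather than the change of variables, which is routine once the union-of-disks void event is in place.
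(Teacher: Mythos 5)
Your proof is correct, and it arrives at exactly the paper's integrand: the three void conditions you isolate (no other macro in $B(0,|y|)$, none in $B(z,|z-y|)$, no micro in $B(z,\rho|z-y|)$) are precisely conditions (i)--(iii) in the paper's own proof, and your change of variables $(z,y)\mapsto(-y,\,z-y)$ reproduces the union-of-disks area $\mathcal{A}(\cdot,\cdot)$ and the radius $r_0$ exactly as stated. The difference is in the Palm bookkeeping used to set up the integral. The paper first writes $\overline{N}_{SU,macro}=\lambda_{SU}\E[|V_{het}(X^*)|]$ and then invokes the \emph{inverse formula of Palm calculus}, $\E[|V_{het}(X^*)|]=\E_{macro}^0[|V_{het}(0)|\cdot|V_{macro}(0)|]\big/\E_{macro}^0[|V_{macro}(0)|]$, which converts the zero-macro-cell quantity into a size-biased typical-cell expectation; the product of the two areas is then expanded by Fubini into a double integral of the two-point probability $\Pro_{macro}^0\{(x_1,x_2)\in V_{macro}(0),\,(y_1,y_2)\in V_{het}(0)\}$, which is evaluated by void probabilities. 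You instead stay in the stationary frame, apply Campbell's formula for $\mathcal{U}_{static}$, and resolve the dual role of $X^*$ by summing over candidate macro atoms via Campbell--Mecke and Slivnyak (using that a.s.\ exactly one atom is nearest to the origin, so the events over atoms are disjoint), finishing with the measure-preserving recentering. The two routes are equivalent in substance --- Campbell--Mecke is what underlies the inverse formula --- but yours is more self-contained, requiring no external citation and making explicit why the integrand factors into a macro union-of-disks void probability times a micro disk void probability; the paper's route makes the size-biasing structure of the zero cell explicit and reuses a formula that it also needs for Theorems~\ref{theorem:mean-number-of-MU-served-by-a-macro-BS}, \ref{theorem:approximation-of-static-users-served-by-macro-BS} and \ref{theorem:mean-number-of-SU-served-by-a-hetnet-BS}, so the same machinery serves several results at once.
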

\begin{proof}
 See Appendix~\ref{appendix}, Subsection~\ref{subsection:proof-of-mean-number-of-SU-served-by-a-macro-BS}.
\end{proof}

The two mean number of users allow us to define the {\em mean MU
  throughput}  as
\begin{equation}
r_{MU}:=\frac{(1-\lambda_c vT_h)\E[R_{macro}(0)]}{\overline{N}_{SU,
    macro}+\overline{N}_{MU,macro}}\,.\label{e.r-MU}
\end{equation}

\subsubsection{An approximation for $\overline{N}_{SU,macro}$}
\label{subsubsection:approximation-for-mean-number-of-static-users-co-served-with-the-typical-mobile-user}
Since the expression for $\overline{N}_{SU,macro}$ in Theorem~\ref{theorem:mean-number-of-SU-served-by-a-macro-BS} 
is not easy for 
numerical computation, we approximate it by the expected number of static users served by a typical macro BS in the 
heterogeneous network $\lambda_{SU} \E_{macro}^0 [|V_{het}(0)|]$, where 
$\E _{macro}^0$ denotes expectation w.r.t. the Palm probability distribution  
given that a macro BS is located at the origin. 
We denote this approximation by $\hat{N}_{SU,macro}$.

\begin{theorem}\label{theorem:approximation-of-static-users-served-by-macro-BS}
 $\hat{N}_{SU,macro}:=\lambda_{SU} \E_{macro}^0 [|V_{het}(0)|]=\frac{pP_{macro}^{2/\beta}}{pP_{macro}^{2/\beta}+(1-p)P_{micro}^{2/\beta}} \times \frac{\lambda_{SU}}{\lambda_{BS}p}$.
\end{theorem}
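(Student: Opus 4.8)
The plan is to express the mean hetnet-cell area of the typical macro BS as a spatial integral of a serving probability, and then to factor that probability into independent void probabilities for the two tiers. By the Campbell--Mecke formula together with Slivnyak's theorem for $\Phi_{macro}$, placing a macro BS at the origin and integrating the indicator that this station serves a static user located at $x$ gives
$$\E_{macro}^0[|V_{het}(0)|]=\int_{\mathbb{R}^2}\Pro\{0\ \text{serves}\ x\}\,dx,$$
where the probability is taken under the reduced Palm distribution, which by Slivnyak coincides with the law of the original, mutually independent processes $\Phi_{macro}$ and $\Phi_{micro}$.

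Next I would characterize the event $\{0\ \text{serves}\ x\}$ in terms of void regions. The macro BS at the origin serves the static user at $x$ iff its received power $P_{macro}(A|x|)^{-\beta}$ exceeds that of every other station of $\Phi$. Comparing with a competing macro station $X_j$ yields the condition $|x|<|X_j-x|$, i.e. the origin must be the closest macro BS to $x$, so no point of $\Phi_{macro}$ may lie in the disk $B(x,|x|)$. Comparing with a micro station $X_k$ yields $|X_k-x|>(P_{micro}/P_{macro})^{1/\beta}|x|=:r_0$, so no point of $\Phi_{micro}$ may lie in the disk $B(x,r_0)$. Since the two tiers are independent Poisson processes, the serving probability factorizes into the product of the two void probabilities,
$$\Pro\{0\ \text{serves}\ x\}=e^{-\lambda_{BS}p\,\pi|x|^2}\,e^{-\lambda_{BS}(1-p)\,\pi r_0^2},$$
with $\pi r_0^2=\pi(P_{micro}/P_{macro})^{2/\beta}|x|^2$.

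Finally I would carry out the integration. Both exponents are proportional to $|x|^2$, so passing to polar coordinates and substituting $u=|x|^2$ reduces everything to an elementary Gaussian-type integral. Writing $c:=p+(1-p)(P_{micro}/P_{macro})^{2/\beta}$, one gets
$$\E_{macro}^0[|V_{het}(0)|]=\int_0^{\infty}2\pi s\,e^{-\pi\lambda_{BS}c\,s^2}\,ds=\frac{1}{\lambda_{BS}c}.$$
Multiplying numerator and denominator by $P_{macro}^{2/\beta}$, noting that $c\,P_{macro}^{2/\beta}=pP_{macro}^{2/\beta}+(1-p)P_{micro}^{2/\beta}$, and then multiplying by $\lambda_{SU}$ (after inserting a factor $p/p$ to match the stated grouping) yields exactly the claimed expression for $\hat{N}_{SU,macro}$.

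I do not expect a serious obstacle: the computation is clean once the correct serving condition against both tiers is identified. The only points requiring care are the invocation of Slivnyak's theorem to replace the reduced Palm law of $\Phi_{macro}$ by the ordinary Poisson law, and the remark that the origin sits exactly on the boundary $\partial B(x,|x|)$, a set of measure zero, so that conditioning on the macro point at the origin does not affect the void probability over that disk.
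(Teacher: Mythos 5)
Your proof is correct, and it reaches the stated formula by a slightly different Palm-calculus route than the paper. The paper works from the typical \emph{location}: it first computes the probability that a static user at the origin is served by a macro BS, by conditioning on the distance $r$ to the nearest macro station and requiring a micro-void disk of radius $(P_{micro}/P_{macro})^{1/\beta}r$ around the origin, obtaining the macro coverage fraction $pP_{macro}^{2/\beta}/\bigl(pP_{macro}^{2/\beta}+(1-p)P_{micro}^{2/\beta}\bigr)$; it then converts this area fraction into the mean cell area of the typical macro BS by dividing by $\lambda_{BS}p$ via the inverse formula of Palm calculus. You instead work from the typical \emph{macro station}: by Campbell--Mecke, Fubini and Slivnyak's theorem you write $\E_{macro}^0[|V_{het}(0)|]$ directly as $\int_{\mathbb{R}^2}\Pro\{0 \text{ serves } x\}\,dx$, factor the serving event into independent void probabilities for the two tiers (no macro point in $B(x,|x|)$, no micro point in $B(x,r_0)$), and evaluate the resulting Gaussian-type integral. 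The underlying void-probability computation is the same in both arguments, but the bookkeeping differs: the paper needs the Palm inversion (exchange) formula while you need Slivnyak's theorem; your version is more self-contained and is in fact the same technique the paper deploys for the exact expression in Theorem~\ref{theorem:mean-number-of-SU-served-by-a-macro-BS}, whereas the paper's version yields the macro coverage fraction as an explicit intermediate quantity, which it reuses in the proofs of Theorems~\ref{theorem:approximation-of-mobile-users-served-by-hetnet-BS} and~\ref{theorem:approximation-of-static-users-served-by-hetnet-BS}.
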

\begin{proof}
 See Appendix~\ref{appendix}, 
 Subsection~\ref{subsection:proof-of-approximation-of-static-users-served-by-macro-BS}.
\end{proof}

\begin{theorem}\label{theorem:lower-bound-of-static-users-served-by-macro-BS}
 $\hat{N}_{SU,macro} \leq \overline{N}_{SU,macro}$.
\end{theorem}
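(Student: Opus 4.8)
The plan is to reduce the claimed inequality to a statement of positive correlation between two cell areas, which I would then settle with the FKG inequality. Writing $\mu:=p\lambda_{BS}$ for the intensity of the macro tier, recall that $\hat{N}_{SU,macro}=\lambda_{SU}\E^0_{macro}[|V_{het}(0)|]$ by definition (Theorem~\ref{theorem:approximation-of-static-users-served-by-macro-BS}), while $\overline{N}_{SU,macro}=\lambda_{SU}\E[|V_{het}(X^*)|]$, where $X^*$ is the macro BS closest to the origin and the outer expectation is the stationary one (the reduction from $\E^0_{MU}$ to $\E$ being the independence remark already made in the text). The key observation is that the origin lies in the macro-Voronoi cell of $X^*$, so $X^*$ is exactly the nucleus of the \emph{zero macro cell}; hence $\overline{N}_{SU,macro}$ is a size-biased version of $\hat{N}_{SU,macro}$, the bias being by the macro-cell area.

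To make this precise, first I would rewrite the stationary expectation as a sum over macro BSs and apply the refined Campbell theorem together with Slivnyak's theorem:
\begin{align*}
\E[|V_{het}(X^*)|]
&=\E\Big[\sum_{X_i\in\Phi_{macro}}\ind\{0\in V_{macro}(X_i)\}\,|V_{het}(X_i)|\Big]\\
&=\mu\int_{\mathbb{R}^2}\E^0_{macro}\big[\ind\{-x\in V_{macro}(0)\}\,|V_{het}(0)|\big]\,dx\\
&=\mu\,\E^0_{macro}\big[|V_{macro}(0)|\,|V_{het}(0)|\big],
\end{align*}
where the last equality uses $\int_{\mathbb{R}^2}\ind\{-x\in V_{macro}(0)\}\,dx=|V_{macro}(0)|$ and the fact that $|V_{het}(0)|$ does not depend on $x$. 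Since $\E^0_{macro}[|V_{macro}(0)|]=1/\mu$ (the mean area of the typical Voronoi cell equals the inverse of the generating intensity), combining the two displays yields
\begin{align*}
\overline{N}_{SU,macro}-\hat{N}_{SU,macro}
&=\lambda_{SU}\Big(\mu\,\E^0_{macro}[|V_{macro}(0)|\,|V_{het}(0)|]-\E^0_{macro}[|V_{het}(0)|]\Big)\\
&=\frac{\lambda_{SU}}{\E^0_{macro}[|V_{macro}(0)|]}\,\mathrm{Cov}^0_{macro}\big(|V_{macro}(0)|,|V_{het}(0)|\big).
\end{align*}
Thus the theorem reduces to showing that, under the macro Palm distribution, the macro cell area and the hetnet cell area of the station at the origin are nonnegatively correlated.

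For this correlation step I would invoke the FKG (Harris) inequality for Poisson processes. By Slivnyak's theorem, under $\E^0_{macro}$ the configuration consists of a deterministic macro BS at the origin together with the two independent Poisson processes $\Phi_{macro}$ and $\Phi_{micro}$ at their original intensities, viewed jointly as a single Poisson process on $\mathbb{R}^2\times\{macro,micro\}$. Both functionals are \emph{decreasing} with respect to the addition of points: adding any macro point can only shrink the Voronoi cell $V_{macro}(0)$ of the origin (and micro points leave it unchanged), and adding any point of either tier can only shrink the hetnet cell $V_{het}(0)$, since it introduces one more competitor for the strongest received power. As both $|V_{macro}(0)|$ and $|V_{het}(0)|$ are decreasing functionals of the same underlying Poisson configuration, FKG gives $\mathrm{Cov}^0_{macro}(|V_{macro}(0)|,|V_{het}(0)|)\ge 0$, which completes the argument.

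The main obstacle I anticipate is the correlation step rather than the bookkeeping: one must verify carefully that $|V_{het}(0)|$ is genuinely monotone (decreasing) as a set functional of the full two-tier configuration --- the weights $P_{macro},P_{micro}$ enter the service rule but do not affect monotonicity, since inserting any station only adds a constraint to the region served by the origin --- and that the version of FKG being used applies to functionals of a Poisson process of infinite total intensity (handled by a standard bounded-window approximation together with monotone convergence). The size-biasing identity itself is routine once the observation $0\in V_{macro}(X^*)$ is made, and the equality $\E^0_{macro}[|V_{macro}(0)|]=1/\mu$ is the classical mean-typical-cell relation for Voronoi tessellations.
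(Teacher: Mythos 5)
Your proof is correct, but it settles the key step by a different mechanism than the paper. The bookkeeping half of your argument (the Campbell--Mecke/Slivnyak computation giving $\overline{N}_{SU,macro}=\lambda_{SU}\lambda_{BS}p\,\E^0_{macro}\bigl[|V_{macro}(0)|\,|V_{het}(0)|\bigr]$) is exactly the inverse-formula step the paper already carries out in the proof of Theorem~\ref{theorem:mean-number-of-SU-served-by-a-macro-BS}; where you diverge is in how the resulting positive-correlation statement is established. The paper never phrases it as a covariance: it starts from the explicit quadruple-integral formula of Theorem~\ref{theorem:mean-number-of-SU-served-by-a-macro-BS}, in which the joint void probability appears as $e^{-\lambda_{BS}p\mathcal{A}((x_1,x_2),(y_1,y_2))-\lambda_{BS}(1-p)\pi r_0^2}$, and applies the elementary geometric inequality $\mathcal{A}((x_1,x_2),(y_1,y_2))\le \pi(x_1^2+x_2^2)+\pi(y_1^2+y_2^2)$ (area of a union of two discs is at most the sum of the areas); this makes the integral factorize into $\E^0_{macro}[|V_{macro}(0)|]\cdot\E^0_{macro}[|V_{het}(0)|]\cdot(\lambda_{BS}p)$, which is precisely a pointwise (per-location-pair) form of the correlation inequality you prove globally. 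Your route instead invokes the Harris/FKG association property of Poisson processes for the two decreasing functionals $|V_{macro}(0)|$ and $|V_{het}(0)|$ of the joint configuration on $\mathbb{R}^2\times\{\text{macro},\text{micro}\}$. The trade-off: the paper's argument is completely self-contained and elementary, exploiting the closed form it has already derived, but it is tied to that formula; your argument is structurally more general --- it would survive modifications (random fading marks, other service rules) under which both cell areas remain decreasing in the configuration and no explicit void-probability formula is available --- at the price of citing the association theorem for Poisson processes and checking the integrability/approximation details you correctly flag (both are fine here, since $|V_{het}(0)|\le|V_{macro}(0)|$ and the typical Poisson--Voronoi cell area has finite second moment).
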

\begin{proof}
 See Appendix~\ref{appendix}, 
 Subsection~\ref{subsection:lower-bound-of-static-users-served-by-macro-BS}.
\end{proof}

\subsubsection{An upper bound for $\overline{N}_{SU,macro}$}
\label{subsubsection:upper-bound-for-mean-number-of-static-users-co-served-with-the-typical-mobile-user}
\begin{theorem}\label{theorem:upper-bound-of-static-users-served-by-macro-BS}
 $\overline{N}_{SU,macro} \leq \frac{1.2802 \lambda_{SU}}{\lambda_{BS}p}$.
\end{theorem}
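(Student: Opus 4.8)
The plan is to bypass the double integral of Theorem~\ref{theorem:mean-number-of-SU-served-by-a-macro-BS} entirely and instead exploit a purely geometric domination: the hetnet cell of the serving macro station is contained in its macro (Voronoi) cell, so the count of static users in the former never exceeds the count in the latter. Recall that $\overline{N}_{SU,macro}=\E^0_{MU}[|\mathcal{U}_{static}(V_{het}(X^*))|]$ and that, by the independence of $\mathcal{U}_{static}$ from $\Phi_{macro},\Phi_{micro}$ and $\mathcal{U}_{mobile}$, the Palm expectation $\E^0_{MU}$ may be replaced by the ordinary $\E$. The first and central step is to establish the inclusion $V_{het}(X^*)\subseteq V_{macro}(0)$.

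To see this, take any point $x\in V_{het}(X^*)$. By definition $X^*$ delivers the strongest received power at $x$ among \emph{all} stations of $\Phi$; in particular it dominates every other macro station, i.e.\ $P_{macro}(A|X^*-x|)^{-\beta}\ge P_{macro}(A|X_j-x|)^{-\beta}$ for all $X_j\in\Phi_{macro}$. Since all macro stations share the common power $P_{macro}$, this reduces to $|X^*-x|\le|X_j-x|$ for every macro station, which is precisely the statement that $x$ lies in the Voronoi cell of $X^*$ with respect to $\Phi_{macro}$; as $X^*$ is the macro station closest to the origin, this cell is $V_{macro}(0)$. Hence $V_{het}(X^*)\subseteq V_{macro}(0)$.

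With the inclusion in hand, monotonicity of the counting measure gives $|\mathcal{U}_{static}(V_{het}(X^*))|\le|\mathcal{U}_{static}(V_{macro}(0))|$ pointwise. Taking expectations and conditioning on $\Phi_{macro}$ (again using that $\mathcal{U}_{static}$ is an independent Poisson process of intensity $\lambda_{SU}$) yields $\overline{N}_{SU,macro}\le \lambda_{SU}\,\E[|V_{macro}(0)|]$. It then remains to identify $\E[|V_{macro}(0)|]$ as the mean area of the \emph{zero-cell} of the Poisson--Voronoi tessellation generated by $\Phi_{macro}$, whose intensity is $\lambda_{BS}p$. The zero-cell (the cell containing the fixed point at the origin) is area-biased relative to the typical cell, so its mean area equals $\E[A^2]/\E[A]$, where $A$ is the typical macro-cell area; for unit intensity $\E[A]=1$ and $\E[A^2]\approx 1.2802$, and rescaling to intensity $\lambda_{BS}p$ multiplies all areas by $1/(\lambda_{BS}p)$. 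Thus $\E[|V_{macro}(0)|]=1.2802/(\lambda_{BS}p)$, and the bound $\overline{N}_{SU,macro}\le 1.2802\,\lambda_{SU}/(\lambda_{BS}p)$ follows.

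The only non-routine ingredient is the constant $1.2802$: it is the second moment of the typical-cell area (equivalently the mean zero-cell area) of the unit-intensity planar Poisson--Voronoi tessellation, for which there is no simple closed form and which I would quote from the standard literature (e.g.~\cite{chiu-etal13stochastic-geometry-and-its-applications}). Everything else --- the set inclusion, the monotonicity, and the independence reduction --- is elementary, so I expect the main (and only mild) obstacle to be correctly invoking this tabulated value together with the area-biasing identity relating the zero-cell to the typical cell.
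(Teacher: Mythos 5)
Your proof is correct and follows essentially the same route as the paper's: both rest on the inclusion $V_{het}(X^*)\subseteq V_{macro}(0)$ together with the zero-cell mean area $\E[|V_{macro}(0)|]=1.2802/(\lambda_{BS}p)$ obtained from the inverse (area-biasing) Palm formula and the tabulated second moment of the typical Poisson--Voronoi cell. The only difference is that you spell out the inclusion and the independence reduction explicitly, whereas the paper asserts them by reference to its earlier discussion and to the proof of Theorem~\ref{theorem:mean-number-of-MU-served-by-a-macro-BS}.
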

\begin{proof}
See Appendix~\ref{appendix}, Subsection~\ref{subsection:proof-of-upper-bound-of-static-users-served-by-macro-BS}.
\end{proof}

\subsection{Downlink Throughput of Static Users}
\label{ss.SU-throughput}
Following the same line of thought as for MUs, we 
denote by
  $$\E[R_{het}(0)]=\E[\log_2(1+\text{SIR}_{het}(0))]$$
 the {\em mean downlink bit-rate} at
  the origin from the base station whose hetnet cell is serving the
  origin;
$$\text{SIR}_{het}(0):=\frac{\max_iP_{i}(A|X_i|)^{-\beta}}{
\sum_{X_i\in\Phi}P_i(A|X_i|)^{-\beta}-\max_iP_{i}(A|X_i|)^{-\beta}}
\,.$$
Here $P_i \in \{P_{micro},P_{macro}\}$ is the transmit power from the base station located at $X_i \in \Phi$. 
We consider $\E[R_{het}(0)]$ as the peak bit-rate of the typical  SU.

When the condition~\eqref{e.hetnet-equivalence} is satisfied, by  the
network equivalence principle, cf~\cite{netequivalence,bartek-keeler15sinr-process-poisson-networks-factorial-moment-measures}, 
$\E[R_{het}(0)]=\E[R_{equivalent}(0)]$ where  this latter expectation
corresponds to $\E[R_{macro}(0)]$ in the one-tier  network consisting
of only macro BS of intensity $\lambda_{BS}$ and using transmit power $P$. 
Thus $\E[R_{equivalent}(0)]$ can be evaluated using the expressions in
Theorem~\ref{lemma:ccdf-sinr-tau-less-than-1} with $p=1$ and 
$P_{micro}=P_{macro}=P$, cf. Remark~\ref{rem.SIR-equivalence}.

In order to account for the resource sharing let $V_{het}(0)$ 
be the {\em zero hetnet cell}, i.e., the cell
  of the (macro or micro) BS that serves a typical SU when present at
  the origin; cf Figure~\ref{fig:hetnet-zero-cell-macro}.  

Denote by $\overline{N}_{SU, het}:=\E^0_{SU}[|\mathcal{U}_{static}(V_{het}(0))|]$ ($|\cdot|$ denotes the 
cardinality of the set here) 
the mean number of SUs present in the zero hetnet cell under the Palm distribution for SUs. 

Let $\overline{N}_{MU, het}:=\E^0_{SU}[| \mathcal{U}_{mobile}(V_{macro}(0))| \ind(V_{het}(0)=V_{het}(X^*))]$ 
be the mean number of MUs present in the zero macro  cell
under the Palm distribution for SUs, provided the BS serving the hetnet cell covering
the origin is a macro BS. Note that these are mobile users sharing the
service with the typical SU  at the origin. Note that, by the independence of 
$\mathcal{U}_{mobile}$, $\mathcal{U}_{static}$, $\Phi_{macro}$ and $\Phi_{micro}$, 
we can replace $\E^0_{SU}$ here simply by $\E$.

In order to express these two mean numbers denote by 
 $\mathcal{B}((x_1,x_2),(y_1,y_2))$ the area of the union of two circles 
 with centers at $(x_1,x_2)$ and $(y_1,y_2)$ and radii $(\frac{P_{micro}}{P_{macro}})^{\frac{1}{\beta}}\sqrt{x_1^2+x_2^2}$ 
 and $(\frac{P_{micro}}{P_{macro}})^{\frac{1}{\beta}}\sqrt{y_1^2+y_2^2}$ respectively. The function 
 $\mathcal{D}((x_1,x_2),(y_1,y_2))$ is defined as the area of the union of two circles 
 with centers at $(x_1,x_2)$ and $(y_1,y_2)$ and radii $(\frac{P_{macro}}{P_{micro}})^{\frac{1}{\beta}}\sqrt{x_1^2+x_2^2}$ 
 and $(\frac{P_{macro}}{P_{micro}})^{\frac{1}{\beta}}\sqrt{y_1^2+y_2^2}$ respectively. 

\begin{theorem}\label{theorem:mean-number-of-SU-served-by-a-hetnet-BS}
The mean number of SUs served by the (macro or micro) BS serving a typical SU located at the origin is given by:
\footnotesize
\begin{eqnarray*}
&& \overline{N}_{SU, het}=1+ \\
&& \lambda_{SU} \lambda_{BS}  p   \int_{(x_1,x_2,y_1,y_2) \in \mathbb{R}^4}  e^{-\lambda_{BS}p\mathcal{A}((x_1,x_2),(y_1,y_2))} \nonumber\\
&&\hspace{3em}  \times e^{-\lambda_{BS}(1-p) \mathcal{B}((x_1,x_2),(y_1,y_2))  }  dx_1 dx_2 dy_1 dy_2 \nonumber\\
 && + \lambda_{SU} \lambda_{BS} (1-p) \int_{(x_1,x_2,y_1,y_2) \in \mathbb{R}^4} e^{-\lambda_{BS}(1-p)\mathcal{A}((x_1,x_2),(y_1,y_2))} \nonumber\\
 && \hspace{3em}\times e^{-\lambda_{BS}p\mathcal{D}((x_1,x_2),(y_1,y_2))}   dx_1 dx_2 dy_1 dy_2.
\end{eqnarray*}
\normalsize
 \end{theorem}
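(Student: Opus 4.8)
The plan is to reduce $\overline{N}_{SU,het}$ to the expected area of the zero hetnet cell and then to evaluate that area through a void-probability argument. First I would exploit the independence of $\mathcal{U}_{static}$ from $(\Phi_{macro},\Phi_{micro})$ together with Slivnyak's theorem for $\mathcal{U}_{static}$ to write
$$\overline{N}_{SU,het}=1+\lambda_{SU}\,\E[\,|V_{het}(0)|\,],$$
where the leading $1$ counts the typical SU sitting at the origin, and the remaining SUs form, after removing the typical point, an independent Poisson process of intensity $\lambda_{SU}$ spread over the cell $V_{het}(0)$, which is determined solely by $\Phi$. Writing $|V_{het}(0)|=\int_{\mathbb{R}^2}\ind(z\in V_{het}(0))\,dz$ and applying Fubini then gives $\E[|V_{het}(0)|]=\int_{\mathbb{R}^2}\Pro(0 \text{ and } z \text{ are served by the same BS})\,dz$, so the task is reduced to computing the pairwise co-service probability.

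To evaluate that probability I would condition on the type and location of the common serving BS. Recalling that a point $w$ is served by the BS minimizing the weighted distance $|X_i-w|\,P_i^{-1/\beta}$, I would apply the Mecke--Slivnyak formula separately to $\Phi_{macro}$ and $\Phi_{micro}$; since at most one BS can serve both points, the resulting expected count of common serving BSs equals the desired probability. The macro contribution is $p\lambda_{BS}\int_{\mathbb{R}^2}(\text{void probability})\,ds$ and the micro contribution is the analogue with $(1-p)\lambda_{BS}$. For a macro serving BS at $s$, a competing macro BS would undercut it only inside $B(0,|s|)\cup B(z,|s-z|)$, while a competing micro BS would undercut it inside the same two balls with radii scaled by $(P_{micro}/P_{macro})^{1/\beta}$ (that being when its weighted distance beats the macro one); by independence of the two tiers the void probability factorizes into $\exp(-p\lambda_{BS}\cdot\mathrm{area}_1)\exp(-(1-p)\lambda_{BS}\cdot\mathrm{area}_2)$. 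The micro-serving case is symmetric, with the cross-tier radius scaling replaced by $(P_{macro}/P_{micro})^{1/\beta}$.

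The closing step is a change of variables that matches these union-of-balls areas to $\mathcal{A},\mathcal{B},\mathcal{D}$. Passing to the frame centered at the serving BS via $(x_1,x_2)=-s$ and $(y_1,y_2)=z-s$ (a unit-Jacobian linear map, so $ds\,dz=dx_1\,dx_2\,dy_1\,dy_2$), the ball $B(0,|s|)$ becomes the circle centered at $(x_1,x_2)$ of radius $\sqrt{x_1^2+x_2^2}$ and $B(z,|s-z|)$ becomes the circle centered at $(y_1,y_2)$ of radius $\sqrt{y_1^2+y_2^2}$ --- precisely the circles defining $\mathcal{A}$; the scaled balls likewise reproduce $\mathcal{B}$ (macro-serving case) and $\mathcal{D}$ (micro-serving case). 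Summing the two contributions, multiplying by $\lambda_{SU}$, and restoring the typical-SU term $+1$ yields the claimed expression.

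I expect the main obstacle to be exactly this geometric bookkeeping: reading off the correct cross-tier void radii from the weighted-distance association rule, and verifying that the shift to the serving-BS frame turns the balls centered at $0$ and $z$ into the circles-through-the-BS (equivalently, through the origin of the new frame) that appear in the definitions of $\mathcal{A},\mathcal{B}$ and $\mathcal{D}$. The probabilistic scaffolding (Slivnyak for the users, Mecke for each tier, independence-based factorization of the void probabilities) is routine once the void regions are correctly identified.
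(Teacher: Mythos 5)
Your proposal is correct, and it reaches the theorem's integrals by a genuinely different route than the paper. Both proofs start identically, reducing $\overline{N}_{SU,het}$ to $1+\lambda_{SU}\E[|V_{het}(0)|]$ via Slivnyak for the user process and independence of $\mathcal{U}_{static}$ from $\Phi$. From there the paper invokes the inverse formula of Palm calculus, writing $\E[|V_{het}(0)|]=\E_{het}^0[|V_{het}(0)|^2]\big/\E_{het}^0[|V_{het}(0)|]$; it evaluates the second moment of the typical hetnet cell area by conditioning on the type of the BS at the origin (macro with probability $p$, micro with probability $1-p$) and computing the joint coverage probabilities $\Pro_{macro}^0\{(x_1,x_2)\in V_{het}(0),(y_1,y_2)\in V_{het}(0)\}$ (and its micro analogue) by the same void-probability, union-of-balls argument you use, and finally substitutes $\E_{het}^0[|V_{het}(0)|]=1/\lambda_{BS}$. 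You instead bypass Palm calculus for the BS process entirely: Fubini turns $\E[|V_{het}(0)|]$ into the integral over $z$ of the co-service probability of $0$ and $z$, and the Mecke--Slivnyak formula (applied tier by tier, using that the number of common serving BSs is $0$ or $1$ a.s.) introduces the serving-BS location $s$ as an integration variable; your unit-Jacobian shift to the serving-BS frame then turns the exclusion balls into exactly the regions $\mathcal{A}$, $\mathcal{B}$, $\mathcal{D}$, with the cross-tier radii scalings $(P_{micro}/P_{macro})^{1/\beta}$ and $(P_{macro}/P_{micro})^{1/\beta}$ correctly read off the weighted-distance association rule. After your change of variables the two computations coincide term by term, so in effect your Mecke argument is a self-contained derivation of the inverse formula in this instance: what your route buys is independence from the cited Palm-calculus machinery (everything follows from Mecke and tier independence), while the paper's route is shorter given that machinery and makes explicit a quantity of independent interest, the second moment of the typical hetnet cell area.
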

\begin{proof}
 See Appendix~\ref{appendix}, Subsection~\ref{subsection:proof-of-mean-number-of-SU-served-by-a-hetnet-BS}.
\end{proof}

\begin{theorem}\label{theorem:mean-number-of-MU-served-by-a-hetnet-BS}
The mean number of MUs served by the same BS serving a typical SU located at the origin is given by:
 \begin{eqnarray*}
\overline{N}_{MU,het} =\frac{\lambda_L \lambda_{MU}}{\lambda_{SU}} \overline{N}_{SU,macro}
 \end{eqnarray*}
\end{theorem}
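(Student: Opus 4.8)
The plan is to reduce both $\overline{N}_{MU,het}$ and $\overline{N}_{SU,macro}$ to a single expectation taken over the macro point process $\Phi_{macro}$ alone, and then to recognize that these two expectations coincide. First I would invoke the independence of $\mathcal{U}_{mobile}$, $\mathcal{U}_{static}$, $\Phi_{macro}$, $\Phi_{micro}$ (as already noted just before the statement) to drop the Palm measure $\E^0_{SU}$ in favour of the ordinary expectation $\E$, giving
\[
\overline{N}_{MU,het}=\E\big[|\mathcal{U}_{mobile}(V_{macro}(0))|\,\ind(V_{het}(0)=V_{het}(X^*))\big].
\]
Conditioning on $\Phi$ (which fixes the cells $V_{macro}(0)$ and $V_{het}(X^*)$, these being deterministic functions of $\Phi$ and independent of both $\mathcal{L}$ and $\mathcal{U}_{mobile}$), and using that $\mathcal{U}_{mobile}$ is a Cox process whose mean number of points in a fixed Borel set $B$ equals $\lambda_{MU}\lambda_L|B|$ (the expected total route length in $B$ is $\lambda_L|B|$, and $\mathcal{U}_{mobile}$ carries $\lambda_{MU}$ points per unit length), I would obtain
\[
\overline{N}_{MU,het}=\lambda_{MU}\lambda_L\,\E\big[\,|V_{macro}(0)|\,\ind(0\in V_{het}(X^*))\big],
\]
where I have used $\{V_{het}(0)=V_{het}(X^*)\}=\{0\in V_{het}(X^*)\}$. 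Treating $\overline{N}_{SU,macro}=\E[|\mathcal{U}_{static}(V_{het}(X^*))|]$ identically (now with the static intensity $\lambda_{SU}$ per unit area) yields $\overline{N}_{SU,macro}=\lambda_{SU}\,\E[\,|V_{het}(X^*)|\,]$.

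The crux is then the purely geometric identity
\[
\E\big[\,|V_{macro}(0)|\,\ind(0\in V_{het}(X^*))\big]=\E\big[\,|V_{het}(X^*)|\,\big],
\]
which, once established, gives $\overline{N}_{MU,het}=\lambda_{MU}\lambda_L\,\overline{N}_{SU,macro}/\lambda_{SU}=\frac{\lambda_L\lambda_{MU}}{\lambda_{SU}}\overline{N}_{SU,macro}$ at once. To prove it I would rewrite both sides as sums over $\Phi_{macro}$ and apply the Campbell--Mecke formula. Since the hetnet cells of distinct macro BSs are disjoint and the only macro station that can serve the origin in the hetnet is the nearest one $X^*$ (all macro BSs emit the common power $P_{macro}$, so the strongest macro is the closest), one has $\ind(0\in V_{het}(X^*))\,|V_{macro}(X^*)|=\sum_{X_j\in\Phi_{macro}}\ind(0\in V_{het}(X_j))\,|V_{macro}(X_j)|$; likewise $|V_{het}(X^*)|=\sum_{X_j\in\Phi_{macro}}\ind(0\in V_{macro}(X_j))\,|V_{het}(X_j)|$, because the origin lies in exactly one macro (Voronoi) cell, namely that of $X^*$. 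Applying Campbell--Mecke to $\Phi_{macro}$ (intensity $\lambda_{BS}p$) and performing the Lebesgue integral over the shift, each side collapses to the same symmetric expression
\[
\lambda_{BS}p\,\E_{macro}^0\big[\,|V_{macro}(0)|\,|V_{het}(0)|\,\big],
\]
which establishes the identity.

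The step I expect to be the main obstacle is this geometric identity: one must argue carefully that the origin is served in the hetnet by a macro station if and only if that station is $X^*$, and that the inclusion $V_{het}(X_j)\subseteq V_{macro}(X_j)$ (already recorded in the model section) makes the two single-term sums consistent, so that the Campbell--Mecke reductions of the two sides land on the \emph{same} integrand $|V_{macro}(0)|\,|V_{het}(0)|$ under $\E_{macro}^0$. The remaining manipulations --- dropping the Palm measures, the Cox mean-measure computation, and the final division by $\lambda_{SU}$ --- are routine.
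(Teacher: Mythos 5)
Your proof is correct, but it takes a genuinely different route from the paper. The paper's proof is a one-line invocation of the mass transport principle: each co-served (SU, MU) pair contributes unit mass, and counting this mass per unit area from the SU side gives $\lambda_{SU}\overline{N}_{MU,het}$ while counting it from the MU side gives $\lambda_L\lambda_{MU}\overline{N}_{SU,macro}$; equating the two yields the theorem with no geometry at all. You instead unroll this into an explicit computation: dropping the Palm measures by independence, using the Cox mean measure to get $\overline{N}_{MU,het}=\lambda_{MU}\lambda_L\,\E\bigl[|V_{macro}(0)|\,\ind(0\in V_{het}(X^*))\bigr]$ and $\overline{N}_{SU,macro}=\lambda_{SU}\,\E\bigl[|V_{het}(X^*)|\bigr]$, and then proving the geometric identity by writing both expectations as sums over $\Phi_{macro}$ (valid because $V_{het}(X_j)\subseteq V_{macro}(X_j)$ forces at most one nonzero term, the one for $X_j=X^*$) and collapsing each via Campbell--Mecke and Fubini to the common quantity $\lambda_{BS}p\,\E_{macro}^0\bigl[|V_{macro}(0)|\,|V_{het}(0)|\bigr]$. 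This is exactly the intermediate expression the paper obtains in its proof of Theorem~\ref{theorem:mean-number-of-SU-served-by-a-macro-BS} via the inverse formula of Palm calculus, so your argument is fully consistent with the paper's framework; in effect you re-prove the mass transport principle in this particular instance rather than citing it. What the paper's approach buys is brevity and generality (it works for any co-service relation between two stationary user processes, with no cell geometry entering); what yours buys is a self-contained verification that needs no appeal to the exchange formula, and it makes visible the shared geometric functional linking Theorems~\ref{theorem:mean-number-of-SU-served-by-a-macro-BS} and~\ref{theorem:mean-number-of-MU-served-by-a-hetnet-BS}.
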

\begin{proof}
This follows from the mass transport principle, since  $\lambda_{SU} \overline{N}_{MU,het} = \lambda_L \lambda_{MU} \overline{N}_{SU,macro}$.
\end{proof}

We define the {\em mean SU  throughput} as
\begin{equation}\label{e.r-SU}
r_{SU}:=\frac{\E[R_{equivalent}(0)]}{\overline{N}_{SU,
    het}+\overline{N}_{MU,het}}\,.
\end{equation}

\subsubsection{An approximation for $\overline{N}_{MU, het}$}
\label{subsubsection:simpler-expression}
In order to obtain a computationally simple expression, we define 
$\hat{N}_{MU,het}:=\Pro \{ V_{het}(0)=V_{het}(X^*) \} \E [|\mathcal{U}_{mobile}(V_{macro}(0))|]$. This is an approximation 
to $\overline{N}_{MU,het}$ since the event $\{ V_{het}(0)=V_{het}(X^*) \}$ and the random variable 
$|\mathcal{U}_{mobile}(V_{macro}(0))|$ are not independent. 

\begin{theorem}\label{theorem:approximation-of-mobile-users-served-by-hetnet-BS}
 $\hat{N}_{MU,het}=\frac{pP_{macro}^{2/\beta}}{pP_{macro}^{2/\beta}+(1-p)P_{micro}^{2/\beta}} \times \frac{1.2802 \lambda_L \lambda_{MU}}{\lambda_{BS}p}$.
\end{theorem}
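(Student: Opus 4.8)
The plan is to use the fact that, by its very definition, $\hat N_{MU,het}=\Pro\{V_{het}(0)=V_{het}(X^*)\}\,\E[|\mathcal{U}_{mobile}(V_{macro}(0))|]$ is a product of two factors, and to evaluate each factor separately before multiplying.

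First I would evaluate the coverage probability $\Pro\{V_{het}(0)=V_{het}(X^*)\}$. The event says that the hetnet cell covering the origin belongs to $X^*$, the macro BS nearest to the origin. Since all macro BSs emit the common power $P_{macro}$, the strongest macro signal at the origin is always the one from $X^*$; hence this event coincides exactly with the event that the origin is served in the hetnet by a macro rather than a micro BS, i.e.\ that the globally strongest received power at the origin comes from $\Phi_{macro}$. To compute its probability I would map each BS $X_i$ to its path-loss mark $\xi_i\Define (A|X_i|)^{\beta}/P_i$, so the received power equals $1/\xi_i$ and the serving BS is the one with the smallest mark. Projecting the two planar Poisson processes onto the marks, the macro marks form a Poisson process on $(0,\infty)$ with mean measure $\Lambda_M((0,t])=(\pi\lambda_{BS}/A^2)\,pP_{macro}^{2/\beta}\,t^{2/\beta}$ and, independently, the micro marks have mean measure $\Lambda_m((0,t])=(\pi\lambda_{BS}/A^2)\,(1-p)P_{micro}^{2/\beta}\,t^{2/\beta}$. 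The key observation is that both measures scale with the \emph{same} power $t^{2/\beta}$, differing only through the constants $pP_{macro}^{2/\beta}$ and $(1-p)P_{micro}^{2/\beta}$. Writing the probability that the smallest mark is a macro one as $\int_0^\infty \Lambda_M'(t)\,e^{-\Lambda_M((0,t))-\Lambda_m((0,t))}\,dt$ and substituting $u=t^{2/\beta}$ collapses the integral to the ratio of the two constants,
$$\Pro\{V_{het}(0)=V_{het}(X^*)\}=\frac{pP_{macro}^{2/\beta}}{pP_{macro}^{2/\beta}+(1-p)P_{micro}^{2/\beta}}\,,$$
which is independent of $\lambda_{BS}$ and $A$ and is the same factor already appearing in Theorem~\ref{theorem:approximation-of-static-users-served-by-macro-BS}.

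Next I would evaluate $\E[|\mathcal{U}_{mobile}(V_{macro}(0))|]$. Because $\mathcal{U}_{mobile}$ is a Cox process independent of $\Phi_{macro}$ whose first-moment density equals $\lambda_L\lambda_{MU}$ per unit area (the Poisson line process carries mean length $\lambda_L$ per unit area and $\lambda_{MU}$ users per unit length), conditioning on $\Phi_{macro}$ and applying Campbell's formula gives $\E[|\mathcal{U}_{mobile}(V_{macro}(0))|]=\lambda_L\lambda_{MU}\,\E[|V_{macro}(0)|]$, where $V_{macro}(0)$ is the zero cell of the Poisson-Voronoi tessellation of intensity $\lambda_{BS}p$. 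Its mean area is the area-biased typical-cell area, equal to $1.2802/(\lambda_{BS}p)$; this is precisely the quantity computed in the proof of Theorem~\ref{theorem:mean-number-of-MU-served-by-a-macro-BS}, from which I would quote it directly. Hence $\E[|\mathcal{U}_{mobile}(V_{macro}(0))|]=1.2802\,\lambda_L\lambda_{MU}/(\lambda_{BS}p)$, and multiplying the two factors yields the claimed expression for $\hat N_{MU,het}$.

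The main obstacle is the first factor: one must argue cleanly that $\{V_{het}(0)=V_{het}(X^*)\}$ is exactly the event that the origin is macro-served, and then that the two mark processes share the exponent $t^{2/\beta}$ so that the competing-minimum integral reduces to a density-independent ratio of constants. Once that identity is established the second factor is routine, since both the independence of the user and BS processes and the value $1.2802$ of the Poisson-Voronoi zero-cell constant are already available from the earlier results.
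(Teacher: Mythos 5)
Your proposal is correct and follows essentially the same route as the paper: factor $\hat N_{MU,het}$ according to its definition, identify $\{V_{het}(0)=V_{het}(X^*)\}$ with the event that the origin is macro-served so that its probability is $\frac{pP_{macro}^{2/\beta}}{pP_{macro}^{2/\beta}+(1-p)P_{micro}^{2/\beta}}$ (the paper quotes this from Theorem~\ref{theorem:approximation-of-static-users-served-by-macro-BS}), and compute $\E[|\mathcal{U}_{mobile}(V_{macro}(0))|]=\lambda_L\lambda_{MU}\,\E[|V_{macro}(0)|]=\frac{1.2802\,\lambda_L\lambda_{MU}}{\lambda_{BS}p}$ exactly as in the proof of Theorem~\ref{theorem:mean-number-of-MU-served-by-a-macro-BS}. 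The only cosmetic difference is that you derive the association probability via the path-loss-mark (projection) process, whereas the paper's cited proof conditions on the distance to the nearest macro BS and integrates radially; these are the same computation after a change of variables, so no substantive gap exists.
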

\begin{proof}
 See Appendix~\ref{appendix}, 
 Subsection~\ref{subsection:proof-of-approximation-of-mobile-users-served-by-hetnet-BS}.
\end{proof}

\subsubsection{An approximation for $\overline{N}_{SU, het}$}
\label{subsubsection:approximation-for-mean-number-of-static-users-co-served-with-a-static-user}
As an approximation to $\overline{N}_{SU, het}$, we define 
$\hat{N}_{SU,het}:=1+\Pro \{ V_{het}(0)=V_{het}(X^*) \} \E_{macro}^0 [|\mathcal{U}_{static}(V_{het}(0))|]+
\Pro \{ V_{het}(0) \neq V_{het}(X^*) \} \E_{micro}^0 [|\mathcal{U}_{static}(V_{het}(0))|]$. 

\begin{theorem}\label{theorem:approximation-of-static-users-served-by-hetnet-BS}
\footnotesize
\begin{eqnarray}
 \hat{N}_{SU, het} &=& 1+ \bigg( \frac{pP_{macro}^{2/\beta}}{pP_{macro}^{2/\beta}+(1-p)P_{micro}^{2/\beta}} \bigg)^2 \frac{\lambda_{SU}}{\lambda_{BS}p} \nonumber\\
 &&+ \bigg( \frac{(1-p)P_{micro}^{2/\beta}}{pP_{macro}^{2/\beta}+(1-p)P_{micro}^{2/\beta}} \bigg)^2 \frac{\lambda_{SU}}{\lambda_{BS}(1-p)} 
 \label{eqn:approximation-for-mean-number-of-static-users-co-served-with-a-static-user}
\end{eqnarray}
\end{theorem}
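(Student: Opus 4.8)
The plan is to evaluate separately the three additive contributions in the definition of $\hat N_{SU,het}$ and then add them; no probabilistic approximation has to be controlled here, since $\hat N_{SU,het}$ is itself a \emph{defined} quantity and the theorem is merely its exact evaluation. First I would identify the event $\{V_{het}(0)=V_{het}(X^*)\}$ as the event that the typical location at the origin is served by the macro tier, and its complement as being served by the micro tier. To compute $\Pro\{V_{het}(0)=V_{het}(X^*)\}$ I would let $R_1=\min_{X\in\Phi_{macro}}|X|$ and $R_2=\min_{Y\in\Phi_{micro}}|Y|$ denote the distances from the origin to the nearest macro and micro station; since the received powers are $P_{macro}(AR_1)^{-\beta}$ and $P_{micro}(AR_2)^{-\beta}$, the origin is macro-served exactly when $R_1<(P_{macro}/P_{micro})^{1/\beta}R_2$. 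Conditioning on $R_2$ and using independence together with $\Pro(R_i>r)=e^{-\pi\lambda_i r^2}$ (with $\lambda_1=p\lambda_{BS}$, $\lambda_2=(1-p)\lambda_{BS}$), a single Gaussian integral gives
$$\Pro\{V_{het}(0)=V_{het}(X^*)\}=\frac{pP_{macro}^{2/\beta}}{pP_{macro}^{2/\beta}+(1-p)P_{micro}^{2/\beta}},$$
so that the complementary event has probability $\frac{(1-p)P_{micro}^{2/\beta}}{pP_{macro}^{2/\beta}+(1-p)P_{micro}^{2/\beta}}$.

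Next I would evaluate the two conditional mean counts. Because the static users form a Poisson process of intensity $\lambda_{SU}$ independent of $\Phi_{macro}$ and $\Phi_{micro}$, the mean number in the serving cell factorizes as $\E_{macro}^0[|\mathcal{U}_{static}(V_{het}(0))|]=\lambda_{SU}\,\E_{macro}^0[|V_{het}(0)|]$, and likewise for the micro tier. The macro quantity is precisely $\hat N_{SU,macro}$ already supplied by Theorem~\ref{theorem:approximation-of-static-users-served-by-macro-BS}, giving
$$\E_{macro}^0[|\mathcal{U}_{static}(V_{het}(0))|]=\frac{pP_{macro}^{2/\beta}}{pP_{macro}^{2/\beta}+(1-p)P_{micro}^{2/\beta}}\cdot\frac{\lambda_{SU}}{\lambda_{BS}p}.$$
For the micro tier I would re-run the identical derivation with the roles of the two tiers interchanged (equivalently, invoke the area-conservation identity that the mean hetnet-cell area of a typical tier-$k$ BS equals its association probability divided by its intensity), obtaining
$$\E_{micro}^0[|\mathcal{U}_{static}(V_{het}(0))|]=\frac{(1-p)P_{micro}^{2/\beta}}{pP_{macro}^{2/\beta}+(1-p)P_{micro}^{2/\beta}}\cdot\frac{\lambda_{SU}}{\lambda_{BS}(1-p)}.$$

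Finally I would substitute these four ingredients into $\hat N_{SU,het}=1+\Pro\{V_{het}(0)=V_{het}(X^*)\}\,\E_{macro}^0[\cdots]+\Pro\{V_{het}(0)\neq V_{het}(X^*)\}\,\E_{micro}^0[\cdots]$. The macro association probability multiplies the macro mean count to produce the square factor $\left(\frac{pP_{macro}^{2/\beta}}{pP_{macro}^{2/\beta}+(1-p)P_{micro}^{2/\beta}}\right)^2\frac{\lambda_{SU}}{\lambda_{BS}p}$, and analogously for the micro term, which is exactly the asserted expression~\eqref{eqn:approximation-for-mean-number-of-static-users-co-served-with-a-static-user}. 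The computation is routine once the three pieces are assembled; the only point demanding care is the micro-tier mean-area identity. I expect this to be the main (minor) obstacle: rather than appealing to the apparent symmetry of the final formula, it is safest to reproduce the derivation of Theorem~\ref{theorem:approximation-of-static-users-served-by-macro-BS} verbatim under the exchange $P_{macro}\leftrightarrow P_{micro}$, $p\leftrightarrow 1-p$, so that the micro hetnet cells (smaller and more irregular) are treated on exactly the same footing as the macro cells.
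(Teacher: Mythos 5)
Your proposal is correct and follows essentially the same route as the paper: compute the macro-association probability $\Pro\{V_{het}(0)=V_{het}(X^*)\}$ via the nearest-point distance distributions, obtain $\E_{macro}^0[|V_{het}(0)|]$ and $\E_{micro}^0[|V_{het}(0)|]$ as the association probability of each tier divided by that tier's intensity (the inverse Palm/area-fraction argument of Theorem~\ref{theorem:approximation-of-static-users-served-by-macro-BS}, applied to the micro tier by swapping $p\leftrightarrow 1-p$, $P_{macro}\leftrightarrow P_{micro}$), and substitute into the definition of $\hat N_{SU,het}$. Your explicit factorization $\E_{macro}^0[|\mathcal{U}_{static}(V_{het}(0))|]=\lambda_{SU}\E_{macro}^0[|V_{het}(0)|]$ by independence, which the paper leaves implicit, is a welcome clarification.
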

\normalsize
\begin{proof}
 See Appendix~\ref{appendix}, 
 Subsection~\ref{subsection:proof-of-approximation-of-static-users-served-by-hetnet-BS}.
\end{proof}

\begin{figure}[t!]
\begin{center}
\begin{minipage}{1\linewidth}
\begin{center}
\centerline{\includegraphics[width=0.4\linewidth]{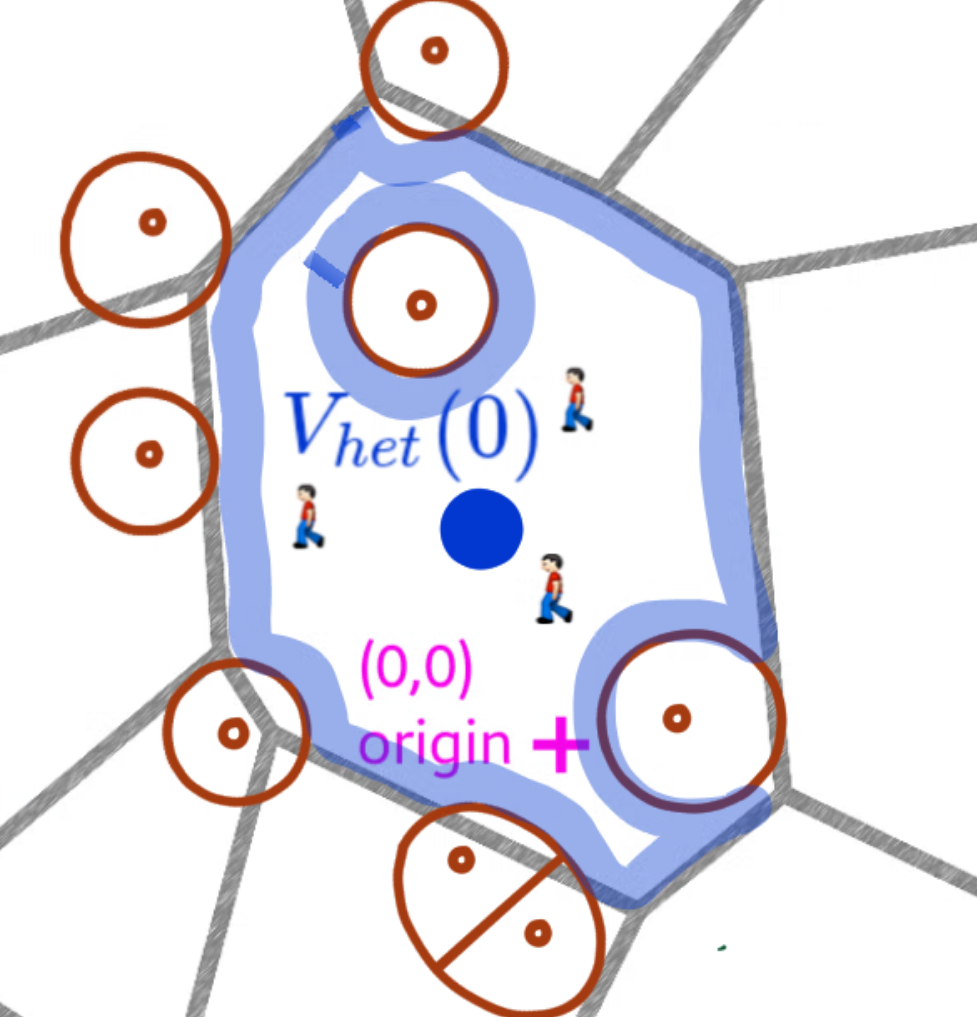}
\hspace{0.03\linewidth}
\includegraphics[width=0.33\linewidth]{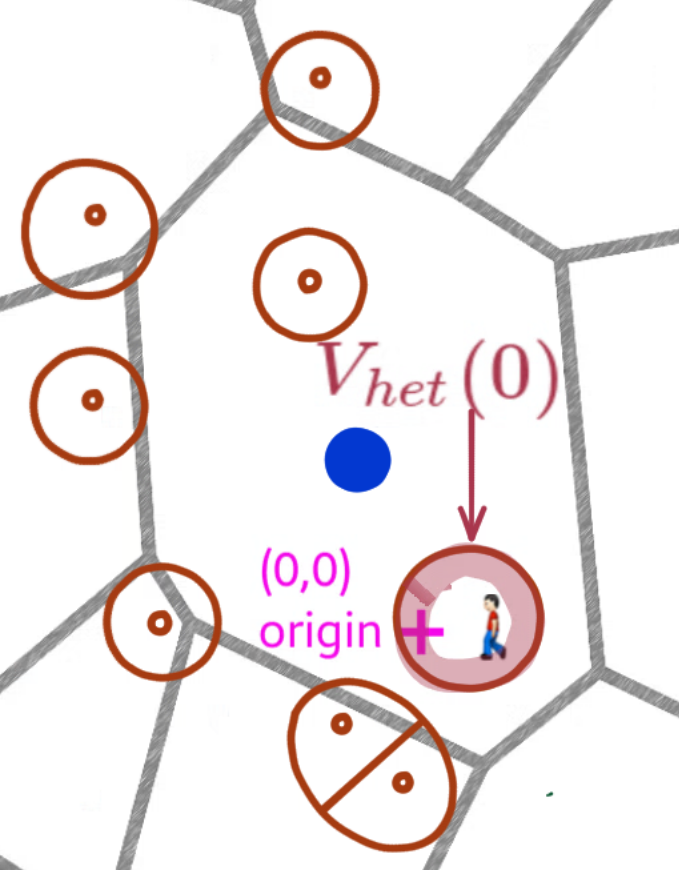}}
\vspace{-2ex}
\caption[Zero hetnet cell]{Zero hetnet cell; it may be a macro cell (left) or a micro cell (right).
It serves a typical SU at the origin.
\label{fig:hetnet-zero-cell-macro}}
\end{center}
\end{minipage}
\end{center}
\vspace{-2ex}
\end{figure}

\section{Optimal Design of the Heterogeneous Network}\label{section:optimal-design-of-the-network-using-stochastic-geometry}
Let us first consider the network in  which all stations transmit with the same power $P_{macro}=P_{micro}=P$. In this
homogeneous network, if the density of BSs $\lambda_{BS}$ is high (and
hence the cells are small) it might be advantageous to let the mobile users
connect only to some fraction $p<1$ of BSs to reduce the frequency
of handoffs during which  the connection is
not assured. Obviously too small  $p$ will result is low data rate 
due to large distance from BSs. Our model
allows us to calculate the optimal value of $p$ as a  function of mobile
speed $v$ and other network parameters. 
If a heterogeneous structure of the  network is allowed, it might be interesting to
further optimize jointly the performance of mobile and static  users
appropriately tuning the powers  $P_{macro}$ and $P_{micro}$. 

We consider first the optimization of the data rate of mobile users
and then optimize the throughput of static and mobile users.
The optimal proportion of macro stations and the transmit powers
provide a guideline for the design of the heterogeneous network.

\subsection{Optimizing Data Rate of Mobile Users}
\label{subsection:the-optimization-problem}
\label{subsubsection:optimization-problem-with-only-mobile-node-data-rate}
Consider the following  optimization of the data rate of mobile users
accounted for handoff events (cf Section~\ref{ss.rate-handoff}) 
within the class of equivalent
heterogeneous networks
\begin{equation}
\sup\limits_{\scriptstyle 0 \leq p \leq 1,P_{micro}, P_{macro} \geq 0\atop
\scriptstyle\text{such that~\eqref{e.hetnet-equivalence} holds}} \,(1-\lambda_cvT_h)\E[R_{macro}(0)]
\label{eqn:single-mobile-user-modified-optimization-without-static-user-rate}
\end{equation}
for some given  $P$ and other model parameters.
The above problem needs to be solved numerically since the 
dependence of the integral~\eqref{e.ER} for $\E[R_{macro}(0)]$ with the 
distribution of SIR given in Theorem~\ref{lemma:ccdf-sinr-tau-less-than-1} cannot be
analytically evaluated with respect to the optimization
parameters. 

However, in order to have some insight into the structure of the MU rate
optimization let us
revisit~\eqref{eqn:single-mobile-user-modified-optimization-without-static-user-rate}
with $\E[R_{macro}(0)]$ approximated by
$\E[R_{macro}(0)\ind(\mathrm{SIR}_{macro}(0)>1)]$.~\footnote{This
  corresponds to the bit-rate with adaptive coding
  available only for SIR larger than~1.} 
Define the constant
$$C:=(\Gamma(1+2/\beta)\Gamma(1-2/\beta))^{-1}\int_{1}^\infty (2^t-1)^{-2/\beta}\,dt$$
Then, under constraint~\eqref{e.hetnet-equivalence}, 
$\E[R_{macro}(0)\ind(\mathrm{SIR}_{macro}(0)>1)]=Cp(P_{macro}/P_{micor})^{2/\beta}$
and hence
\begin{align}
&(1-\lambda_cvT_h)\E[R_{macro}(0) \ind(\mathrm{SIR}_{macro}(0)>1)]=\nonumber\\
&(1-4vT_h\sqrt{\lambda_{BS}p}/\pi)Cp(P_{macro}/P_{micro})^{2/\beta}.
\label{e.ER1SIR>1}
\end{align}
It is easy to see (calculating the derivative in $p$) that the value of~\eqref{e.ER1SIR>1} is maximized with $p\in[0,1]$
for $p=p^*$, where
\begin{equation}
p^*:=
\min(1,\pi^2/(36v^2T_h^2\lambda_{BS}))\,.\label{e.p*}
\end{equation}

\begin{remark}\label{rem.MU-rate-opt}
\begin{enumerate}
\item Note that the value of $p^*$ in~\eqref{e.p*} does not depend on
  the power values $P_{macro}, P_{micro}$. In case of a homogeneous
  network these powers are fixed and equal to
  $P_{macro}=P_{micro}=P$. In this case $p^*$ can be interpreted as
  the optimal  fraction of BSs to which MUs should connect so as to
optimize their bit-rate. However, it has to be kept in mind that this formula is being used only to 
provide an intuitive explanation for not using all the base stations to 
serve the MUs. 
\item When a heterogeneous architecture is allowed,  the value
  of~\eqref{e.ER1SIR>1} with $p=p^*$ can be further maximized in
  $P_{macro}$ under constraint~\eqref{e.hetnet-equivalence}. I is easy to see that the optimal choice consists in
  taking $P_{macro}:=(p^*)^{\beta/2}P$ and $P_{micro}=0$. 
This means that using micro BSs is counterproductive from the point of
view of the maximization of the  bit-rate of MUs. Shutting down micro
BSs  and increasing appropriately the power of macro BSs (so as to
ensure the equivalent service for static users) appears to be an
optimal solution. This observation complies with the fact that 
the micro BSs are meant to provide extra capacity (and not rate-coverage)
to the network. Indeed, we shall see in the next section that 
only a joint  optimization of the throughput (which is a capacity metric) of static and mobile users
suggest a usage of micro BSs.
\end{enumerate}
\end{remark}

\subsection{Optimizing User Throughput}
\label{subsection:incorporating-data-rate-of-static-users-in-optimization}
We consider now optimization of the user throughput.
Our first observation is that if one focuses only on the throughput of
mobile users $r_{MU}$ given by~\eqref{e.r-MU}, i.e. considers 
\begin{equation}
\sup\limits_{\scriptstyle 0 \leq p \leq 1,P_{micro}, P_{macro} \geq 0\atop
\scriptstyle\text{such that~\eqref{e.hetnet-equivalence} holds}} \,
\frac{(1-\lambda_c vT_h)\E[R_{macro}(0)]}{\overline{N}_{SU,
    macro}+\overline{N}_{MU,macro}}\,,
\label{eqn:bandwidth-sharing-case-modified-optimization-without-static-user-rate}
\end{equation}
then, as in the case of MU rate optimization considered in
Section~\ref{subsubsection:optimization-problem-with-only-mobile-node-data-rate}, 
 the optimal solutions
consists in taking some $p<1$ when $vT_h\sqrt{\lambda_{BS}p}$ too large,
do not use micro BSs ($P_{micro}=0$) and adapt appropriately the
power of macro BSs (observed numerically). 
This can be again explained by the  observation that 
micro BSs are meant to provide capacity to static users. When 
$r_{SU}$ is absent from the optimization then there is no reason to use 
micro BSs.

This takes us to our ultimate problem of a joint optimization of
the throughput of static and mobile users
\begin{equation}
\sup\limits_{\scriptstyle 0 \leq p \leq 1,P_{micro}, P_{macro} \geq 0\atop
\scriptstyle\text{such that~\eqref{e.hetnet-equivalence} holds}} \,
r_{MU}+\xi r_{SU}
\label{eqn:optimization-problem-linear-combination}
\end{equation}
where $r_{MU}$ and $r_{SU}$ is the throughput of mobile and static
user given by~\eqref{e.r-MU} and \eqref{e.r-SU}, respectively, and 
$\xi$ is a multiplier that  captures the emphasis we put 
on the rate of the typical static user in the objective
function.~\footnote{E.g. taking the ratio of the intensities of the
  two types of users $\xi=\lambda_{SU}/(\lambda_{MU}\lambda_L)$ one
  considers in~\eqref{eqn:optimization-problem-linear-combination}
the mean throughput of the typical user (static or mobile).}
As we shall see in
Section~\ref{subsection:numerical-work-on-stochastic-geometry-results}, 
if enough emphasis is put on the throughput of static users then the
usage of micro stations is advantageous.

Let us denote the optimal solution of \eqref{eqn:optimization-problem-linear-combination} by 
$p^*(\xi)$, $P_{micro}^*(\xi)$ and $P_{macro}^*(\xi)$, and the corresponding optimal rates by 
$r_{MU}^*(\xi)$ and $r_{SU}^*(\xi)$.

\begin{lemma}\label{lemma:linear-combination-convex-increasing-in-xi}
 $r_{MU}^*(\xi)+\xi r_{SU}^*(\xi)$ is convex, increasing in $\xi$.
\end{lemma}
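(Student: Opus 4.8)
The plan is to read $r_{MU}^*(\xi)+\xi r_{SU}^*(\xi)$ as the optimal value of problem~\eqref{eqn:optimization-problem-linear-combination} and then to invoke the classical fact that a pointwise supremum of affine functions is convex. First I would fix notation for the feasible set
$$S:=\{(p,P_{micro},P_{macro})\ :\ 0\le p\le 1,\ P_{micro},P_{macro}\ge 0,\ \text{constraint~\eqref{e.hetnet-equivalence} holds}\}.$$
The crucial structural point is that $S$ does \emph{not} depend on $\xi$: the multiplier $\xi$ enters only the objective, not the constraint. Hence for each fixed feasible point $s\in S$ the throughputs $r_{MU}(s)$ and $r_{SU}(s)$ given by~\eqref{e.r-MU} and~\eqref{e.r-SU} are fixed numbers, so the map $\xi\mapsto r_{MU}(s)+\xi r_{SU}(s)$ is an affine function of $\xi$. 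By the definition of the optimizers,
$$r_{MU}^*(\xi)+\xi r_{SU}^*(\xi)=\sup_{s\in S}\bigl(r_{MU}(s)+\xi r_{SU}(s)\bigr)=:f(\xi).$$

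For convexity I would argue that $f$ is the pointwise supremum, over $s\in S$, of the affine (hence convex) functions $\xi\mapsto r_{MU}(s)+\xi r_{SU}(s)$. Each such affine term satisfies the convexity inequality with equality, and taking the supremum preserves the inequality; thus for $\xi_1,\xi_2$ and $\theta\in[0,1]$ one gets $f(\theta\xi_1+(1-\theta)\xi_2)\le\theta f(\xi_1)+(1-\theta)f(\xi_2)$. This is the standard envelope argument and uses nothing about the specific expressions beyond the $\xi$-independence of $S$.

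For monotonicity I would examine the sign of the slopes. Each slope $r_{SU}(s)$ is a mean throughput: a nonnegative Shannon bit-rate $\E[R_{equivalent}(0)]$ divided by the strictly positive mean number of co-served users $\overline{N}_{SU,het}+\overline{N}_{MU,het}\ge 1$ (recall $\overline{N}_{SU,het}=1+\cdots\ge 1$ from Theorem~\ref{theorem:approximation-of-static-users-served-by-hetnet-BS}). Hence $r_{SU}(s)\ge 0$ for every $s\in S$, so each affine function $\xi\mapsto r_{MU}(s)+\xi r_{SU}(s)$ is non-decreasing, and the pointwise supremum of non-decreasing functions is non-decreasing in $\xi$.

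The argument is elementary; the only step that needs care is the first one, namely the identification of $r_{MU}^*(\xi)+\xi r_{SU}^*(\xi)$ with the supremal value $f(\xi)$ (rather than with the objective evaluated at some $\xi$-dependent but suboptimal point) together with the fact that the constraint set $S$ does not vary with $\xi$. Once that identification is in place, convexity follows from the supremum-of-affine-functions structure and monotonicity from $r_{SU}\ge 0$; notably, no attainment of the supremum is required, so the reasoning remains valid even if the optimizers are only interpreted through near-optimal sequences.
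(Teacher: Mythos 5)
Your proof is correct and follows essentially the same route as the paper: both view $r_{MU}^*(\xi)+\xi r_{SU}^*(\xi)$ as the pointwise supremum, over the ($\xi$-independent) feasible set, of the affine increasing functions $\xi\mapsto r_{MU}+\xi r_{SU}$, and conclude convexity and monotonicity from that envelope structure. Your write-up merely makes explicit two points the paper leaves implicit, namely that the constraint set does not depend on $\xi$ and that $r_{SU}\ge 0$ gives the nonnegative slopes.
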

\begin{proof}
 See Appendix~\ref{appendix}, 
 Subsection~\ref{subsection:proof-of-linear-combination-convex-increasing-in-xi}.
\end{proof}

\begin{lemma}\label{lemma:linear-combination-optimization-rate-of-mobile-user-decreases-in-xi}
 $r_{SU}^*(\xi)$ is increasing in $\xi$, and $r_{MU}^*(\xi)$ is decreasing in $\xi$.
\end{lemma}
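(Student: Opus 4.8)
The plan is to use a standard revealed-preference (exchange) argument between two values of the multiplier. Fix $\xi_1<\xi_2$ and write $\theta(\xi):=(p^*(\xi),P_{micro}^*(\xi),P_{macro}^*(\xi))$ for the optimal feasible triple, so that $r_{MU}^*(\xi)=r_{MU}(\theta(\xi))$ and $r_{SU}^*(\xi)=r_{SU}(\theta(\xi))$. Since $\theta(\xi_2)$ is a feasible point of problem~\eqref{eqn:optimization-problem-linear-combination} and $\theta(\xi_1)$ maximizes $r_{MU}+\xi_1 r_{SU}$, optimality at $\xi_1$ yields
\[ r_{MU}(\theta(\xi_1))+\xi_1 r_{SU}(\theta(\xi_1))\ \ge\ r_{MU}(\theta(\xi_2))+\xi_1 r_{SU}(\theta(\xi_2)), \]
and symmetrically optimality at $\xi_2$ yields
\[ r_{MU}(\theta(\xi_2))+\xi_2 r_{SU}(\theta(\xi_2))\ \ge\ r_{MU}(\theta(\xi_1))+\xi_2 r_{SU}(\theta(\xi_1)). \]

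First I would add these two inequalities: the $r_{MU}$ terms cancel and, after rearranging, one obtains $(\xi_1-\xi_2)\bigl(r_{SU}^*(\xi_1)-r_{SU}^*(\xi_2)\bigr)\ge 0$. As $\xi_1-\xi_2<0$, this forces $r_{SU}^*(\xi_1)\le r_{SU}^*(\xi_2)$, i.e. $r_{SU}^*$ is non-decreasing in $\xi$. Next I would substitute this back into the first optimality inequality, rewritten as
\[ r_{MU}^*(\xi_1)-r_{MU}^*(\xi_2)\ \ge\ \xi_1\bigl(r_{SU}^*(\xi_2)-r_{SU}^*(\xi_1)\bigr). \]
The right-hand side is nonnegative, since $\xi_1\ge 0$ and $r_{SU}^*(\xi_2)\ge r_{SU}^*(\xi_1)$ by the previous step; hence $r_{MU}^*(\xi_1)\ge r_{MU}^*(\xi_2)$, so $r_{MU}^*$ is non-increasing in $\xi$, which establishes both claims.

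The argument is purely algebraic once the two optimality inequalities are written down, so the only genuine care needed is bookkeeping: that a maximizer $\theta(\xi)$ exists for each $\xi$, so that $r_{MU}^*,r_{SU}^*$ are well defined (a brief compactness remark on the feasible set cut out by the equivalence constraint~\eqref{e.hetnet-equivalence} suffices), and the understanding that the conclusion is weak monotonicity. I expect no serious obstacle beyond this. The result is also consistent with Lemma~\ref{lemma:linear-combination-convex-increasing-in-xi}: setting $V(\xi)=r_{MU}^*(\xi)+\xi r_{SU}^*(\xi)$, the envelope theorem gives $V'(\xi)=r_{SU}^*(\xi)$, so $r_{SU}^*$ is the derivative of a convex function and is therefore increasing, an alternative route that dovetails with the preceding lemma.
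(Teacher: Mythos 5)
Your proposal is correct and follows essentially the same revealed-preference (exchange) argument as the paper: writing the two optimality inequalities at the two multiplier values, adding them to cancel the $r_{MU}$ terms and deduce monotonicity of $r_{SU}^*$, then feeding that back into one optimality inequality (using $\xi\ge 0$) to get monotonicity of $r_{MU}^*$. The paper's proof only sketches this second step ("in a similar way"), so your write-up is simply a fully spelled-out version of the same argument.
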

\begin{proof}
 See Appendix~\ref{appendix}, 
 Subsection~\ref{subsection:proof-of-linear-combination-optimization-rate-of-mobile-user-decreases-in-xi}.
\end{proof}

Problem~\eqref{eqn:optimization-problem-linear-combination} can be used to solve the following constrained problem:
\begin{eqnarray}
&& \sup\limits_{\scriptstyle 0 \leq p \leq 1,P_{micro}, P_{macro} \geq 0\atop
\scriptstyle\text{such that~\eqref{e.hetnet-equivalence} holds}} \,
r_{MU} \nonumber\\
&& \text{such that } r_{SU} \geq r_0
\label{eqn:constrained-optimization-problem}
\end{eqnarray}

The following standard result tells us how to choose $\xi$.
\begin{theorem}\label{theorem:constrained-and-unconstrained}
 If there exists $\xi^* \geq 0$ such that, under the optimal solution of \eqref{eqn:optimization-problem-linear-combination} 
 with $\xi=\xi^*$, the constraint in \eqref{eqn:constrained-optimization-problem} is met with equality, then that solution 
 is optimal for the constrained problem \eqref{eqn:constrained-optimization-problem} as well.
\end{theorem}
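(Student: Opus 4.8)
The plan is to recognize the statement as a standard weak-duality (Lagrangian sufficiency) fact and to argue it directly, without invoking any external theorem. I would denote by $(p^*,P_{micro}^*,P_{macro}^*)$ the optimal solution of the unconstrained problem~\eqref{eqn:optimization-problem-linear-combination} at $\xi=\xi^*$, and write $r_{MU}^*=r_{MU}^*(\xi^*)$ and $r_{SU}^*=r_{SU}^*(\xi^*)$ for the throughputs it attains. The hypothesis of the statement is precisely that $r_{SU}^*=r_0$, i.e. the constraint of~\eqref{eqn:constrained-optimization-problem} is active at this point.

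First I would check feasibility of this candidate for the constrained problem~\eqref{eqn:constrained-optimization-problem}: it satisfies the power-equivalence condition~\eqref{e.hetnet-equivalence} by construction, and the constraint $r_{SU}\ge r_0$ holds with equality since $r_{SU}^*=r_0$. In particular, the optimal value of~\eqref{eqn:constrained-optimization-problem} is at least $r_{MU}^*$.

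Next I would establish the reverse inequality, which is the substance of the argument. Take any triple $(p,P_{micro},P_{macro})$ feasible for~\eqref{eqn:constrained-optimization-problem}, with throughputs $r_{MU},r_{SU}$; feasibility gives $r_{SU}\ge r_0$. Optimality of $(p^*,P_{micro}^*,P_{macro}^*)$ for the unconstrained problem yields
$$r_{MU}+\xi^* r_{SU}\le r_{MU}^*+\xi^* r_{SU}^*.$$
Rearranging and substituting $r_{SU}^*=r_0$ gives $r_{MU}\le r_{MU}^*+\xi^*(r_0-r_{SU})$. Since $\xi^*\ge 0$ and $r_{SU}\ge r_0$, the slack term $\xi^*(r_0-r_{SU})$ is nonpositive, so $r_{MU}\le r_{MU}^*$. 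Thus $r_{MU}^*$ is an upper bound on the constrained objective; combined with the feasibility established in the first step, the bound is attained, so the given solution is optimal for~\eqref{eqn:constrained-optimization-problem}.

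There is essentially no analytic obstacle here---the argument is the elementary sufficiency side of Lagrangian duality. The only point requiring care is the sign bookkeeping in the slack term: it is exactly the combination of a nonnegative multiplier $\xi^*\ge 0$ with an active constraint at the candidate point ($r_{SU}^*=r_0$) and the feasibility inequality $r_{SU}\ge r_0$ that forces the penalty to act in the favourable direction. Notably, no convexity or concavity of the feasible set or of the maps $r_{MU},r_{SU}$ is needed, which is why the conclusion holds despite these objectives being complicated, non-explicit functions of the design parameters; convexity would only become relevant if one wished to guarantee the \emph{existence} of such a $\xi^*$, which the statement takes as a hypothesis rather than asserts.
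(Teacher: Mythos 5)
Your proof is correct: the paper itself offers no proof of this theorem, simply labeling it a ``standard result,'' and your argument is exactly the standard Lagrangian sufficiency reasoning it implicitly invokes (feasibility of the candidate, then $r_{MU}\le r_{MU}^*+\xi^*(r_0-r_{SU})\le r_{MU}^*$ for any feasible competitor). Your closing remark that no convexity of $r_{MU}$, $r_{SU}$, or the feasible set is needed --- only the nonnegative multiplier, the active constraint, and feasibility --- is also accurate and worth keeping, since it explains why the result applies to these non-explicit objectives.
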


\subsection{Numerical Results and Insights to the Network Design Problem}\label{subsection:numerical-work-on-stochastic-geometry-results}

\begin{figure}[!t]
\begin{center}
\includegraphics[height=7cm, width=9cm]{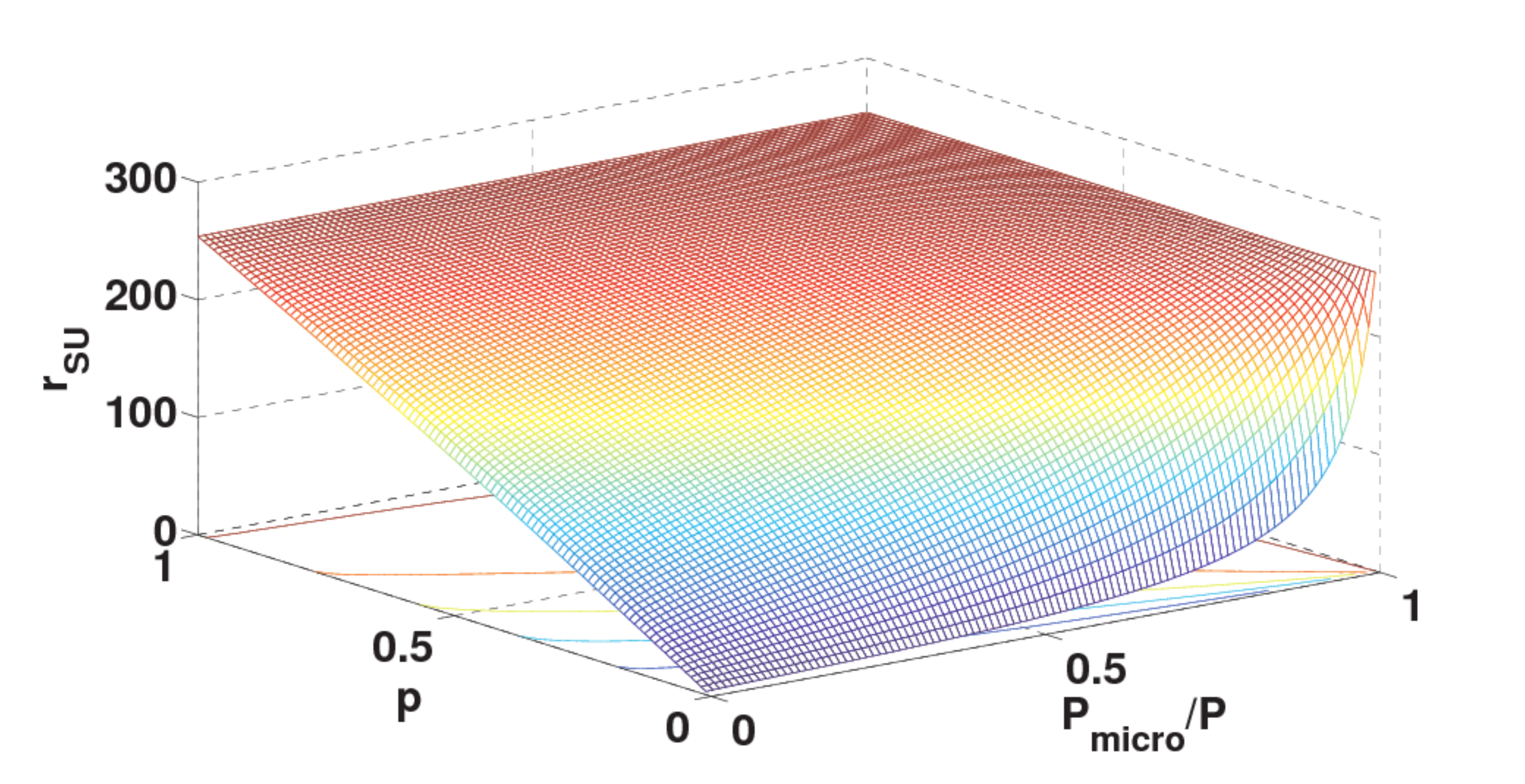}
\end{center}
\vspace{-4mm}
\caption{Variation of $r_{SU}$ with $p$ and $\frac{P_{micro}}{P}$ for system parameters chosen in 
Section~\ref{subsection:numerical-work-on-stochastic-geometry-results}. When $p$ and $P_{micro}$ are both $0$, 
there are only micro base stations with $0$ transmit power. Hence, all users get 
 zero data rate. This situation is excluded from this plot.}
\label{fig:rsu}
\vspace{-3mm}
\end{figure}

\begin{figure}[!t]
\begin{center}
\includegraphics[height=7cm, width=9cm]{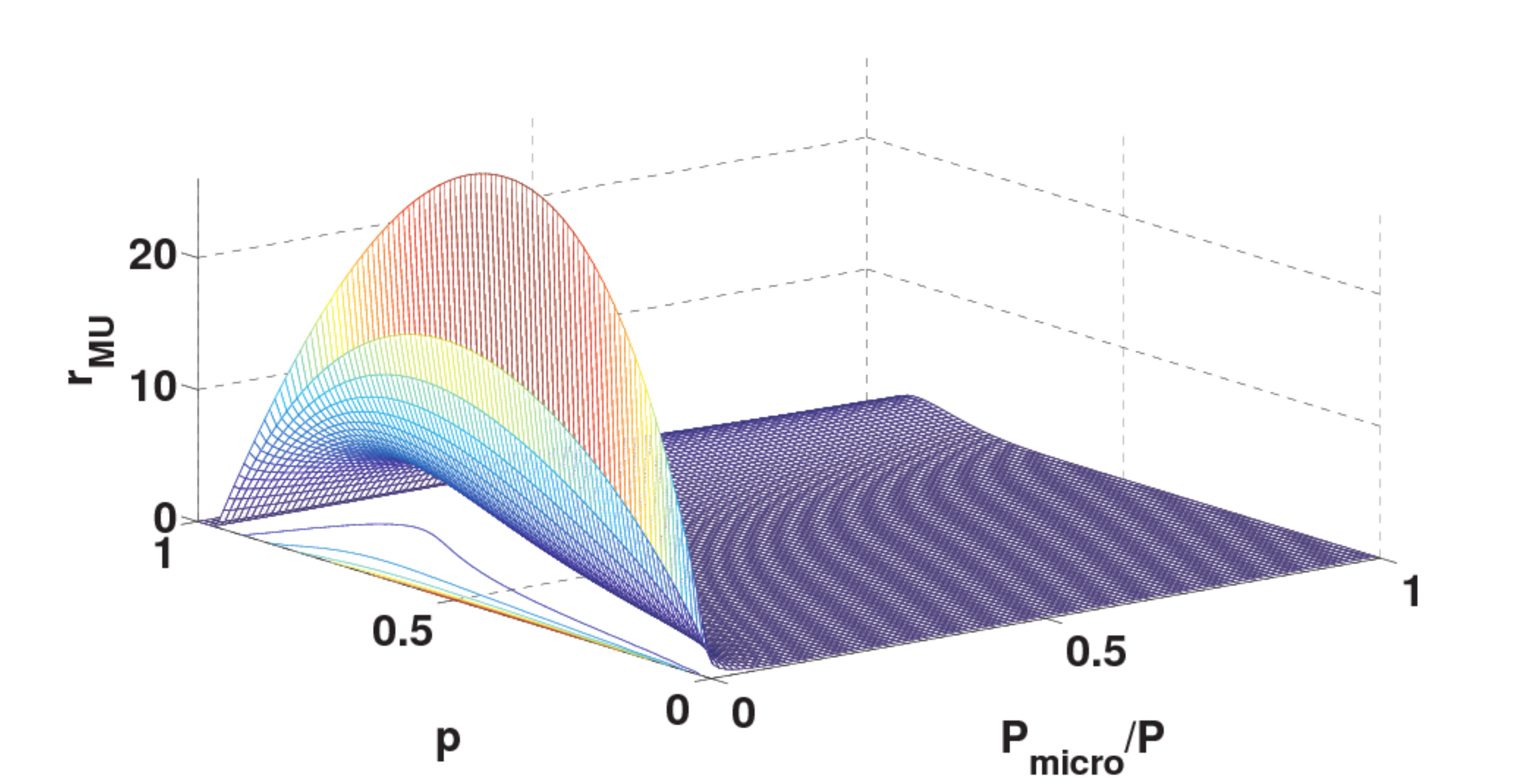}
\end{center}
\vspace{-4mm}
\caption{Variation of $r_{MU}$ with $p$ and $\frac{P_{micro}}{P}$ for system parameters chosen in 
Section~\ref{subsection:numerical-work-on-stochastic-geometry-results}. The situation when $p$ and $P_{micro}$ are both $0$ is 
not included in the plot.}
\label{fig:rmu}
\end{figure}

We consider a networks with $P=1$~unit (the results presented in this section are invariant w.r.t. $P$; the reason 
is that if we scale the transmit power at each BS by a constant factor, the SIR at any location and the 
cell structure remain unchanged), $\lambda_{BS}=\frac{1}{2500}/\text{m}^2$ (one base station per 
$50 \text{m} \times 50 \text{m}$ area), $v=20 \text{m}/\text{sec}$ ($72$~kmph), 
$T_h=2$~second (note that, the results in this section will remain unchanged if we vary $v$ and 
$T_h$ while keeping their product constant), 
$\lambda_{SU}=\frac{1}{400}/\text{m}^2$ (one static user per $20 \text{m} \times 20 \text{m}$ area), 
$\lambda_{MU}=\frac{1}{20}/ \text{m}$ (one MU per $20$~m distance) and $\lambda_L=\frac{\sqrt{2}}{50}/ \text{m}$ 
(equivalent to the length of the 
diagonal in a $50 \text{m} \times 50 \text{m}$ square).

The variation of $r_{SU}$ and $r_{MU}$ with $p$ and $\frac{P_{micro}}{P}$ are shown in 
Figure~\ref{fig:rsu} and \ref{fig:rmu} respectively. Several observations can be made 
from these plots as discussed below:
\begin{itemize}
\item As $p$ or $P_{micro}$ increases, the network becomes more and more homogeneous, and $r_{SU}$ increases.
\item As $P_{micro}$ increases for a fixed $p$, the throughput $r_{MU}$ decreases because $P_{macro}$ decreases and 
interference from micro BSs increases.
\item As $p$ increases keeping $P_{micro}$ fixed, $r_{MU}$ first increases and then decreases. Initially $r_{MU}$ 
increases because more macro BSs are added that can serve the MUs. But beyond certain value of $p$, the throughput loss 
due to frequent handoff starts dominating, and hence $r_{MU}$ decreases with $p$.
\item The performance of MUs is very sensitive to the value of $P_{micro}$; $r_{MU}$ drops rapidly as $P_{micro}$ 
increases from $0$. Hence, the value of $P_{micro}$ should be small if we want high $r_{MU}$, and the value of $p$ has to be 
chosen optimally as shown in Figure~\ref{fig:rsu} and \ref{fig:rmu}.
\end{itemize}

\begin{table}[t!]
\centering
\footnotesize
\begin{tabular}{|c |c |c|c|c|c|}
\hline
     $\xi$            & $p^*$ & $\frac{P_{micro}^*}{P}$ & $\frac{P_{macro}^*}{P}$ & $r_{SU}^*(\xi)$  & $r_{MU}^*(\xi)$   \\ 
                      &       &                       &                       & (bits/sec/Hz)  &   (bits/sec/Hz)                    \\ \hline
0.001    & 0.4 & 0 & 4.9704  & 107.0036  & 33.2543 \\ \hline
0.01     & 0.41 & 0 & 4.7602 & 109.5806 & 33.2408 \\ \hline
0.1     & 0.5 & 0 & 3.3636 & 132.5683 & 31.9570 \\ \hline
0.2     & 0.62 & 0 & 2.3084 & 162.6537 &  27.3609 \\ \hline
0.29        &   0.73    &  0  &  1.7345  &  189.6808  &  20.6754  \\ \hline
0.3     &  0.9 & 1 & 1 & 253.8792 & 1.4436 \\ \hline
0.4     &  0.9 & 1 & 1 & 253.8792 & 1.4436 \\ \hline
0.5     &  0.9 & 1 & 1 & 253.8792 & 1.4436 \\ \hline
1     &  0.9 & 1 & 1 & 253.8792 & 1.4436 \\ \hline
 \end{tabular}
 \normalsize
\caption{Effect of $\xi$ on the optimal heterogeneous network design under the equivalent network condition}
\label{table:unbounded-and-bounded-power}
\end{table}

Now we focus on numerical solution to the problem \eqref{eqn:optimization-problem-linear-combination}. 
Let us recall that the optimal solution of \eqref{eqn:optimization-problem-linear-combination} is denoted  by 
$p^*(\xi)$, $P_{micro}^*(\xi)$ and $P_{macro}^*(\xi)$, and the corresponding optimal rates are denoted by 
$r_{MU}^*(\xi)$ and $r_{SU}^*(\xi)$. 
From the numerical results in Table~\ref{table:unbounded-and-bounded-power}, we observe that $p^*<1$, $P_{macro}^*>P$ and 
$P_{micro}^*=0$ for small values of $\xi$, whereas $p^*<1$, $P_{macro}^*=P_{micro}^*=P$ 
above certain value of $\xi$.\footnote{We have used programs from  \cite{keeler-programs} to compute the function 
$\mathcal{J}_{n, \beta}(\cdot)$.} This is explained 
by the fact that small $\xi$ puts more weightage on the throughput of mobile users, and hence micro base stations 
(which cause interference to MU downlink) are shut down, and a fraction of BSs are used as macro BSs with high power. On the other hand, large value 
of $\xi$ puts more weightage on the throughput of static users, thereby resulting in a homogeneous network design with 
$P_{macro}^*=P_{micro}^*=P$. The results also demonstrate that using macro BSs can improve the rate of MUs in practice 
(which is not intuitive since macro BSs reduce handoff rate, but at the same time a typical MU is co-served with more SUs and MUs, and 
the macro cell size increases). The interesting part of the observation is 
that the network should always be homogeneous in both cases; this is a consequence of the sensitivity of $r_{MU}$ with 
$p$ and $P_{micro}$ as discussed in the previous paragraph. 
Of course, the optimal design 
will depend on parameters such as $\lambda_{BS}$, $v$, $T_h$, $\lambda_{MU}$ and $\lambda_L$, and also the choice of $\xi$; hence, 
choice of the optimal fraction and power levels of macro and micro base stations will depend on the estimates of user densities and user velocity 
estimates. The choice will also depend on the physicals constraints of the system designer (e.g., availability and cost for macro and micro 
BSs, maximum transmit power available at macro BSs etc.) For example, for large $v$, the macro BSs may need very high power, but the 
commercially available BSs may not be able to meet this power requirement.

From Table~\ref{table:unbounded-and-bounded-power}, we can solve the constrained problem 
\eqref{eqn:constrained-optimization-problem} by choosing appropriate $\xi^*$ as 
described in Theorem~\ref{theorem:constrained-and-unconstrained}.


\section{Conclusion}\label{section:conclusion}
In this paper, we have explored the design (or cell planning) of heterogeneous cellular networks to combat throughput 
loss due to handoff. Analytical results and numerical exploration demonstrate the performance and tradeoffs.

Even though we have solved the basic problem in this paper, there are many possible extensions as well as 
numerous issues to improve upon: 
(i) We assumed full interference from all base stations, but it would be of interest to consider the effect of frequency reuse 
(e.g., in wireless standards such as LTE-A) on the design and resource allocation problems addressed in this paper, (ii) An interesting problem will be 
cell planning for other models of user mobility such as random waypoint model (\cite{hyytia-virtamo13random-waypoint-model}), 
(iii) In practice, there can be multiple possible values of user velocity 
(generally the network operator will classify user velocities into a discrete set). 
Hence, a multi-tier network architecture needs to be developed. However, our numerical work has left the open question about the choice of 
design parameters in such multi-tier networks, since the numerical work with formulation~\eqref{eqn:optimization-problem-linear-combination} 
proposes a homogeneous network whereas that formulation cannot achieve all feasible tuples of $(r_{MU},r_{SU})$ (and 
a solution {\em suboptimal} to this formulation has to be adopted),  
(iv) Extension to the realistic situation where (random) shadowing variation over space modulates the path-loss function is a challenging problem, since 
this will result in unpredictable behaviour of handoff request generation process due to lack of 
an accurate statistical characterization of the variation of shadowing over space, 
(v) Cooperative transmission by multiple base stations to the mobile users can also be explored as 
a potential solution for throughput loss due to handoff, 
(vi) Extension to future 5G network models is very challenging, since the association of the mobile users to the densely 
deployed base stations are supposed to change rapidly over time, resulting in an unprecedented amount of handoff traffic, 
(vii) In this paper, we have assumed that macro base stations serve SUs and MUs, while micro BSs serve only SUs. There can be other 
service disciplines, such as macro base stations serving only MUs and micro base stations serving only SUs, or a fraction of macro 
BSs serving only MUs and some other fraction serving SUs and MUs; performance of such 
service disciplines needs to be investigated. (viii) The results in this paper 
do not guarantee a minimum throughput for the users all the time; if a user is located where there is no BS close to it, it will experience poor 
throughput. A grid-like base station process may be able to solve this issue, but the optimal 
design procedure of such networks needs to be explored. (ix) A simple way to address the problem of handoff failure due 
to overload in the target cell would be to multiply the throughput of MUs  by the probability that the target 
macro BS rejects 
a handoff request (this probability has to be averaged over all macro BSs), 
and solve the same optimization  as in this paper. However, in practice, this probability will 
be a function of the network design parameters and user densities; hence, the numerical optimization will be more complicated 
than that solved in this paper. We propose 
to pursue some of these topics in our future research endeavours.

{\small
\bibliographystyle{unsrt}
\bibliography{arpan-techreport}
}

  \vspace{-10mm}

\begin{IEEEbiography}[{\includegraphics[width=1in,height=1in,clip,keepaspectratio]{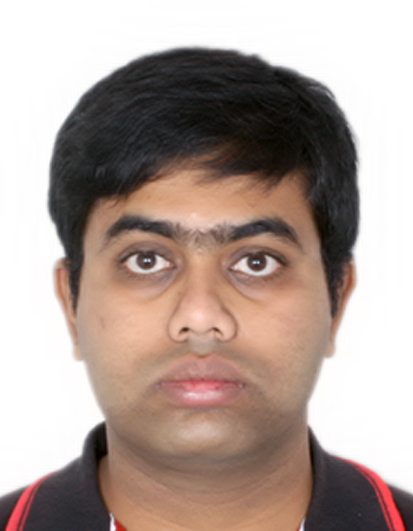}}]{Arpan 
Chattopadhyay} obtained his B.E. in Electronics and Telecommunication Engineering from Jadavpur University, 
Kolkata, India in the year 2008, and M.E. and Ph.D in Telecommunication Engineering from Indian Institute of Science, 
Bangalore, India in the year 2010 and 2015, respectively. He is currently working in INRIA, Paris as a postdoctoral researcher. 
His research interests include  networks, machine learning, information theory and control.
    \end{IEEEbiography}

        \vspace{-10mm}

\begin{IEEEbiography}[{\includegraphics[width=1in,height=1in,clip,keepaspectratio]{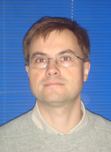}}]
{Bartlomiej Blaszczyszyn} received his PhD degree and Habilitation qualification
in applied mathematics from University of Wrocław (Poland) in 1995
and 2008, respectively. He is now a Senior Researcher at Inria (France), and
a member of the Computer Science Department of Ecole Normale Supérieure
in Paris. His professional interests are in applied probability, in particular in
stochastic modeling and performance evaluation of communication networks.
He coauthored several publications on this subject in major international
journals and conferences, as well as a two-volume book on {\em Stochastic
Geometry and Wireless Networks} NoW Publishers, jointly with F. Baccelli.
   
    \end{IEEEbiography}

    \vspace{-10mm}

   \begin{IEEEbiography}[{\includegraphics[width=1in,height=1in,clip,keepaspectratio]{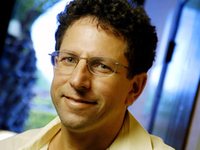}}]
   {Eitan Altman}  received the B.Sc. degree in electrical
engineering (1984), the B.A. degree in
physics (1984) and the Ph.D. degree in electrical
engineering (1990), all from the Technion-Israel
Institute, Haifa. In (1990) he further received his
B.Mus. degree in music composition in Tel-Aviv
University. Since 1990, he has been with INRIA
(National research institute in informatics and
control) in Sophia-Antipolis, France. His current
research interests include performance evaluation
and control of telecommunication networks and
in particular congestion control, wireless communications and networking
games. He is in the editorial board of several scientific journals: JEDC,
COMNET, DEDS and WICON. He has been the (co)chairman of the
program committee of several international conferences and workshops
on game theory, networking games and mobile networks.
   \end{IEEEbiography}


\renewcommand{\thesubsection}{\Alph{subsection}}

\appendices

 \section{}\label{appendix}
 
 \subsection{Expression for $\mathcal{J}_{n,\beta}(\cdot)$ function}
 \label{subsection:expression-for-Jn}
 
 The function $\mathcal{J}_{n,\beta}(x_1,x_2,\cdots,x_n)$ is defined as follows:
 
 \footnotesize
 \begin{eqnarray*}
 \frac{(1+\sum_{j=1}^n x_j)}{n}  \int_{[0,1]^{n-1}}  \frac{ \Pi_{i=1}^{n-1} v_i^{i(\frac{2}{\beta}+1)-1} (1-v_i)^{\frac{2}{\beta}}  }{ \Pi_{i=1}^n (x_i+\eta_i) }     dv_1 dv_2 \ldots dv_{n-1}
 \end{eqnarray*}
\normalsize
  where 
  \begin{eqnarray*}
  \eta_1 &=& v_1 v_2 \ldots v_{n-1}\\
  \eta_2 &=& (1-v_1) v_2 \ldots v_{n-1}\\
    \eta_3 &=& (1-v_2) v_3 \ldots v_{n-1}\\
&& \cdots \\
    \eta_n &=& 1-v_{n-1}\\
 \end{eqnarray*}

 The function $\mathcal{J}_{n,\beta}(x)$ is calculated by substituting $x_1=x_2=\cdots=x_n=x$ in the expression for 
 $\mathcal{J}_{n,\beta}(x_1,x_2,\cdots,x_n)$.

  \subsection{Proof of Theorem~\ref{lemma:ccdf-sinr-tau-less-than-1}}
\label{subsection:proof-of-lemma-ccdf-sinr-tau-less-than-1}
Denote $\text{SIR}=\text{SIR}_{macro}(0)$ and denote the interference at the origin 
by the base stations belonging to $\Phi_{micro}$ by $I_{micro}$. 
For a given realization of $I_{micro}$, using
\cite[Corollary~$19$]{bartek-keeler15sinr-process-poisson-networks-factorial-moment-measures}),
we can write 
 that, 
$\Pro (\text{SIR}>\tau|I_{micro})=\sum_{n=1}^{\lceil{\frac{1}{\tau}}\rceil}(-1)^{n-1} \tau_n^{-\frac{2n}{\beta}} \mathcal{J}_{n, \beta} ( \tau_n ) 
\mathcal{I}_{n,\beta}(I_{micro}a^{-\frac{\beta}{2}})$, where $\tau_n:=\frac{\tau}{1-(n-1)\tau}$, $a=\frac{ \pi \lambda_{BS} p P_{macro}^{\frac{2}{\beta}} }{A^2}$,
and the function 
$\mathcal{I}_{n,\beta}(\cdot)$ is defined in \cite[Equation~$(13)$]{bartek-keeler15sinr-process-poisson-networks-factorial-moment-measures}. 
For completeness, we provide the expression 
$\mathcal{I}_{n,\beta}(x):=\frac{2^n \int_0^{\infty} u^{2n-1}e^{-u^2-u^{\beta}x \Gamma(1-2 / \beta)^{-\beta/2}} du }{\beta^{n-1}(n-1)! \Gamma(1-2 / \beta)^n \Gamma(1+2 / \beta)^n}$. 

Unconditioning $\mathcal{I}_{n,\beta}(I_{micro}a^{-\frac{\beta}{2}})$
over $I_{micro}$, we obtain: 
\begin{align*}
&\E[\mathcal{I}_{n,\beta}(I_{micro}a^{-\frac{\beta}{2}})]\\
&=
\bigg( \beta^{n-1} (n-1)! \bigg)^{-1} \bigg( \frac{2}{\Gamma(1-\frac{2}{\beta}) \Gamma(1+\frac{2}{\beta})}
\bigg)^n \\
& \times \int_0^{\infty} u^{2n-1}e^{-u^2}\E[e^{-I_{micro}
  a^{-\frac{\beta}{2}}u^{\beta}
  \Gamma(1-\frac{2}{\beta})^{-\frac{\beta}{2}}  }] du.
\end{align*}
From
\cite[Equation~$(3.3)$]{bartek-muhlethaler15interference-aware-sinr-coverage-aloha-arxiv},
we can write the Laplace transform 
\begin{align*}
\E[e^{-I_{micro}  a^{-\frac{\beta}{2}}u^{\beta}
  \Gamma(1-\frac{2}{\beta})^{-\frac{\beta}{2}}}]&=\mathcal{L}_{I_{micro}}(a^{-\frac{\beta}{2}}u^{\beta} \Gamma(1-\frac{2}{\beta})^{-\frac{\beta}{2}})\\
&=e^{-\lambda_{BS} (1-p) A^{-2} a^{-1} u^2 \pi
  P_{micro}^{\frac{2}{\beta}} }\\
&=e^{- u^2 (1-p)/p (P_{micro}/P_{macro})^{\frac{2}{\beta}}}.
\end{align*}
Putting this 
into the expression of $\Pro (\text{SIR}>\tau)$ we obtain
\footnotesize
\begin{eqnarray*}
&& \Pro (\text{SIR}>\tau) \\
&=& \sum_{n=1}^{\lceil{\frac{1}{\tau}}\rceil}(-1)^{n-1} \tau_n^{-\frac{2n}{\beta}} \mathcal{J}_{n, \beta} ( \tau_n ) \times \bigg( \beta^{n-1} (n-1)! \bigg)^{-1} \\
&& \times \bigg( \frac{2}{\Gamma(1-\frac{2}{\beta}) \Gamma(1+\frac{2}{\beta})} \bigg)^n 
\times \int_0^{\infty} u^{2n-1}e^{-u^2 \bigg(  1+\frac{(1-p)P_{micro}^{\frac{2}{\beta}}}{p P_{macro}^{\frac{2}{\beta}}} \bigg) } du  \\
\end{eqnarray*}
\normalsize

 Substituting $v:=u^2 \bigg( 1+\frac{(1-p)P_{micro}^{\frac{2}{\beta}}}{p P_{macro}^{\frac{2}{\beta}}}  \bigg)$ in the above 
integral and simplifying it further, we prove the lemma.
It might be also useful to observe that $2/(\Gamma(1-\frac{2}{\beta})\Gamma(1+\frac{2}{\beta})
=\beta/\pi\sin(2\pi/\beta+\pi)$

 \subsection{Proof of Remark~\ref{rem:ergodicity-along-a-line-without-handoff-for-a-single-user}}
\label{subsection:proof-of-ergodicity-theorem}
 Consider a given  directed line $l$ on the plane.
 Note that the couple $(\Phi_{macro},\Phi_{micro})$ is ergodic (by \cite[Proposition~$2.6$]{meester-roy96continuum-percolation})  and the $\sigma$-field $\sigma(\Phi_{macro},\Phi_{micro})$ is countably generated.
 Hence,
 by~\cite[Proposition~$2.7$]{meester-roy96continuum-percolation}, the sample average data 
 rate (average of $R_{macro}(x)$  taken over all points $x$ along  line $l$), exists and is almost surely equal to a constant
 $c_l$ for all, except for at most countably many number of lines
 $l$, with $c_l=\E[R_{macro}(x_l)]$ for arbitrary   $x_l\in l$.
 Since the couple $(\Phi_{macro},\Phi_{micro})$ is translation invariant
 $\E[R_{macro}(x_l)]=\E[R_{macro}(0)]$. Finally, probability that the
 Poisson line process $\mathcal{L}$ places some of its lines
 in the at most countable subset of lines $l$ is null.

\subsection{Proof of Lemma~\ref{lemma:lower-bound-on-fraction-of-nonoutage-length-over-real-line}}
\label{subsection:proof-of-lower-bound-on-fraction-of-nonoutage-length-over-real-line}
Recall that $\lambda_c$ is the density of handoffs (macro cell boundary
crossings)  on every line of $\mathcal{L}$. Observe that $vT_h$ is the length of
the segment  corresponding to each handoff event. 
If the segments corresponding to
different handoffs on a given line  were disjoint the fraction of the
line where mobiles are not in handoff would be equal to
$1-\lambda_cvT_h$. The inequality results from the fact that 
two different handoff events may have overlapping segments on the
line, or, in other words, that (in case of crossing small cells) a MU may not recover from the previous
handoff before going into the next one.

 \subsection{Proof of Theorem~\ref{theorem:mean-number-of-MU-served-by-a-macro-BS}}
\label{subsection:proof-of-mean-number-of-MU-theorem}
Under $\E_{MU}^0$ the typical MU is located at the origin, while
other MUs form a Poisson process of intensity $\lambda_{MU}$ on lines
of the original, independent $\mathcal{L}$ appended with one extra line, crossing the origin and
independently, uniformly oriented. This is the line  along which
moves the typical user.  
Denote by $L_{macro}(0)$ the  intersection of this extra
line with $V_{macro}(0)$ and by  $|V_{macro}(0)|$ and $|L_{macro}(0)|$
the respective area and  length.
 Knowing that the expected total length of intersection
of $\mathcal{L}$ with any given set is equal to $\lambda_{L}$ times
the surface of this set we have: 
\begin{equation}\label{eqn:intermediate-expression-mean-number-of-mobile-users-served-simultaneously}
 \overline{N}_{MU,macro}=1+(\lambda_L \E[|V_{macro}(0)|+\E_{MU}^0[|L_{macro}(0)|])\lambda_{MU},
\end{equation}
where we replaced $\E_{MU}^0$[...] by $\E[...]$ in the first term due
to independence of $\mathcal{L}$ and $\Phi$.
By the inverse formula of Palm calculus, cf.~\cite[Theorem 4.1 and
  Corollary 4.4]{FnT1}, we have
$\E[|V_{macro}(0)|]=\frac{\E_{macro}^0[|V_{macro}(0)|^2]}{\E_{macro}^0[|V_{macro}(0)|]},$
where $\E_{macro}^0$ corresponds to Palm distribution for
$\Phi_{macro}$ (i.e. given a macro BS at the origin).
 Now, $\E_{macro}^0[|V_{macro}(0)|]=1/(\lambda_{BS}p)$, (cf ~\cite[Corollary 4.3]{FnT1})
and  the variance $\mathbf{Var}_{macro}^0(|V_{macro}(0)|)=\frac{0.2802}{(\lambda_{BS}p)^2}$
(cf \cite[Table~$9.5$]{chiu-etal13stochastic-geometry-and-its-applications})
Hence, $\E_{macro}^0[|V_{macro}(0)|^2]=\mathbf{Var}_{macro}^0(|V_{macro}(0)|)+(\E_{macro}^0[|V_{macro}(0)|]^2=\frac{1.2802}{(\lambda_{BS}p)^2}$
and consequently $\E[|V_{macro}(0)|]=\frac{1.2802}{\lambda_{BS}p}$.

Regarding the length of $L_{macro}(0)$ under $\E_{MU}^0$, we can observe that it
has the same distribution as the length 
of the interval $\tilde L(0)$ between two consecutive handoffs
(crossings of the macro cell boundary) of, say, $x$ axis, which
covers the origin. $\tilde L(0)$ is hence the zero interval
(the one covering the origin) of the point process of macro cell boundary
crossings with the $x$ axis. This process has intensity 
$\lambda_c=\frac{4 \sqrt{\lambda_{BS}p}}{\pi}$
cf~\cite[Equations~$5.7.4$ with $m=2$]{okabe99spatial-tesselations}.
Using the same inverse formula (this time in one dimension)
we obtain:
\begin{align*}
\E_{MU}^0[|L_{macro}(0)|]=\E[|\tilde L(0)|]=\lambda_{c}\E_{cross}^0[|\tilde L(0)|^2]\,
\end{align*}
where $\E_{cross}^0$ corresponds to the Palm probability of the
point process of the macro cell boundary crossings by the horizontal
$x$ axis. 
By \cite[Table~$5.7.2$]{okabe99spatial-tesselations},
$\E_{cross}^0[|\tilde L_{macro}(0)|]^2=\frac{0.804}{\lambda_{BS}p}$. 
Hence, 
$$\E_{MU}^0[|L_{macro}(0|]=\frac{0.804 \times 4}{\pi \times \sqrt{\lambda_{BS}p}}=\frac{3.216 }{\pi \sqrt{\lambda_{BS}p}}.$$ 
Plugging in the expression~\eqref{eqn:intermediate-expression-mean-number-of-mobile-users-served-simultaneously}
we prove the theorem.

 \subsection{Proof of Theorem~\ref{theorem:mean-number-of-SU-served-by-a-macro-BS}}
 \label{subsection:proof-of-mean-number-of-SU-served-by-a-macro-BS}
Let $\E _{macro}^0$ 
denote expectation w.r.t. the Palm probability distribution $\Pro_{macro}^0$ (probability 
given that a macro BS is located at the origin). 
Similarly as in the proof of 
Theorem~\ref{theorem:mean-number-of-MU-served-by-a-macro-BS}:

\footnotesize
\begin{eqnarray}
  \overline{N}_{SU, macro} &=& \lambda_{SU}\E[|V_{het}(X^*)|] \nonumber\\
 &=& \lambda_{SU} \frac{ \E _{macro}^0 [|V_{het}(0)|\times |V_{macro}(0)|]   }{ \E _{macro}^0 [|V_{macro}(0)|]} \nonumber\\
 &=& \lambda_{SU} \lambda_{BS} p \E _{macro}^0[|V_{het}(0)|\times| V_{macro}(0) |]  \label{eqn:basic-expression_number-of-statis-users-served-by-the-macro-BS}
\end{eqnarray}
\normalsize
where the  first equality is by the independence between
$\mathcal{U}_{static}$ and $\Phi$ and  the second by the inverse formula
of Palm calculus. Now, 

\footnotesize
\begin{eqnarray}
 && \E _{macro}^0 [|V_{het}(0)|\times| V_{macro}(0)|] \nonumber\\
 &=& \E _{macro}^0 \bigg( \int_{(x_1,x_2) \in \mathbb{R}^2} \ind  \{ (x_1,x_2) \in V_{macro}(0) \} dx_1 dx_2  \nonumber\\
 && \times \int_{(y_1,y_2) \in \mathbb{R}^2} \ind  \{ (y_1,y_2) \in V_{het}(0) \} dy_1 dy_2 \bigg) \nonumber\\
&=& \E _{macro}^0 \bigg( \int_{(x_1,x_2)}  \int_{(y_1,y_2)} \ind  \{ (x_1,x_2) \in V_{macro}(0), \nonumber\\
&&\hspace{5em} (y_1,y_2) \in V_{het}(0) \} dx_1 dx_2 dy_1 dy_2 \bigg) \nonumber\\
&=& \int_{(x_1,x_2)}  \int_{(y_1,y_2)} \Pro _{macro}^0\{ (x_1,x_2) \in V_{macro}(0), \nonumber\\
&& \hspace{5em} (y_1,y_2) \in V_{het}(0) \} dx_1 dx_2 dy_1 dy_2  \label{eqn:second-expression_number-of-statis-users-served-by-the-macro-BS}
\end{eqnarray}
\normalsize
Given that there is a macro BS at the origin, 
$(x_1,x_2) \in V_{macro}(0)$ and $(y_1,y_2) \in V_{het}(0)$ if and only if these three conditions are satisfied: 
(i) there is no other macro BS in a circle centered at $(x_1,x_2)$ and having radius $\sqrt{x_1^2+x_2^2}$, (ii) there is no 
other macro BS in a circle centered at $(y_1,y_2)$ and having radius $\sqrt{y_1^2+y_2^2}$, and 
(iii) there is no 
other micro BS inside a circle centered at $(y_1,y_2)$ and having radius $r_0$ where $P_{macro}(\sqrt{y_1^2+y_2^2})^{-\beta}=P_{micro}r_0^{-\beta}$, i.e., 
$r_0=(\frac{P_{micro}}{P_{macro}})^{\frac{1}{\beta}}\sqrt{y_1^2+y_2^2}$.

Hence, 
\begin{align*}
&\Pro _{macro}^0\{ (x_1,x_2) \in V_{macro}(0),  (y_1,y_2) \in
  V_{het}(0) \}\\
&=e^{-\lambda_{BS}p \mathcal{A}((x_1,x_2),(y_1,y_2)) -
    \lambda_{BS}(1-p) \pi r_0^2}\,,
\end{align*}
where $r_0=(\frac{P_{micro}}{P_{macro}})^{\frac{1}{\beta}}\sqrt{y_1^2+y_2^2}$. This, combined with 
(\ref{eqn:basic-expression_number-of-statis-users-served-by-the-macro-BS}) and 
(\ref{eqn:second-expression_number-of-statis-users-served-by-the-macro-BS}), proves the theorem.

\subsection{Proof of Theorem~\ref{theorem:approximation-of-static-users-served-by-macro-BS}}
\label{subsection:proof-of-approximation-of-static-users-served-by-macro-BS}

Let us assume that the macro BS closest to the origin is located at a distance $r$ from the origin. Then, the origin will be served by the macro BS 
if and only if there is no micro BS in a circle centered at origin with radius $R$, where $P_{micro}R^{-\beta}=P_{macro}r^{-\beta}$, 
i.e., $R=(\frac{P_{micro}}{P_{macro}})^{1/\beta}r$. The probability that the nearest macro BS to the origin is located at a distance between 
$r$ and $r+dr$ is given by $f(r)dr$ where $f(r)=e^{-\lambda_{BS}p \pi r^2} \lambda_{BS}p \times 2 \pi r$. 
Hence, the probability that a static user located at the origin is served by a macro BS is given by 
$\int_0^{\infty} e^{- \lambda_{BS}(1-p) \pi (P_{micro}/P_{macro})^{2/\beta}r^2} f(r) dr$, which, after simplification, yields 
that the probability that a typical static user 
located at the origin is served by a macro BS is given by 
 $\frac{pP_{macro}^{2/\beta}}{pP_{macro}^{2/\beta}+(1-p)P_{micro}^{2/\beta}}$. This is also the fraction of area over 
 $\mathbb{R}^2$ where SUs are served by macro BSs. 
 
 Now, by the inverse formula of Palm calculus, 
 $\E_{macro}^0 [|V_{het}(0)|]=\frac{pP_{macro}^{2/\beta}}{pP_{macro}^{2/\beta}+(1-p)P_{micro}^{2/\beta}} \times \frac{1}{\lambda_{BS}p}$, 
 from which the proof  follows.

\subsection{Proof of Theorem~\ref{theorem:lower-bound-of-static-users-served-by-macro-BS}}
\label{subsection:lower-bound-of-static-users-served-by-macro-BS}
Let us recall the expression for $\overline{N}_{SU,macro}$ from Theorem~\ref{theorem:mean-number-of-SU-served-by-a-macro-BS}. 
Note that, $\mathcal{A}((x_1,x_2),(y_1,y_2)) \leq \pi (x_1^2+x_2^2)+\pi (y_1^2+y_2^2)$. Hence, 

\begin{eqnarray*}
&& \overline{N}_{SU, macro}\\ 
 & \geq & \lambda_{SU} \lambda_{BS} p \int_{(x_1,x_2) \in \mathbb{R}^2}\int_{(y_1,y_2) \in \mathbb{R}^2} \\
 && e^{-\lambda_{BS}p \pi (x_1^2+x_2^2)-\lambda_{BS}p \pi (y_1^2+y_2^2) - \lambda_{BS}(1-p) \pi r_0^2} \\
 && dx_1 dx_2 dy_1 dy_2 
\end{eqnarray*}
Now, $\int_{(x_1,x_2) \in \mathbb{R}^2} e^{-\lambda_{BS}p \pi (x_1^2+x_2^2)} dx_1 dx_2=\E_{macro}^0 (|V_{macro}(0)|)=\frac{1}{\lambda_{BS} p}$ 
and $\int_{(y_1,y_2) \in \mathbb{R}^2} e^{-\lambda_{BS}p \pi (y_1^2+y_2^2) - \lambda_{BS}(1-p) \pi r_0^2} dy_1 dy_2 =\E_{macro}^0 (|V_{het}(0)|)$. 
Hence, $\overline{N}_{SU, macro} \geq \lambda_{SU} \E_{macro}^0 (|V_{het}(0)|):=\hat{N}_{SU, macro}$.

\subsection{Proof of Theorem~\ref{theorem:upper-bound-of-static-users-served-by-macro-BS}}
\label{subsection:proof-of-upper-bound-of-static-users-served-by-macro-BS}
Note that, $\overline{N}_{SU,macro}=\lambda_{SU} \E_{MU}^0 [|V_{het}(X^*)|]$. But $V_{het}(X^*)$ is a subset 
of macro cell $V_{macro}(0)$ containing the origin, and $\E_{MU}^0 [|V_{macro}(0)|]=\E [|V_{macro}(0)|]=\frac{1.2802}{\lambda_{BS}p}$ 
(as shown in the proof of Theorem~\ref{theorem:mean-number-of-MU-served-by-a-macro-BS}). 
Hence, $\overline{N}_{SU,macro} \leq \frac{1.2802 \lambda_{SU}}{\lambda_{BS}p}$.

\subsection{Proof of Theorem~\ref{theorem:mean-number-of-SU-served-by-a-hetnet-BS}}
 \label{subsection:proof-of-mean-number-of-SU-served-by-a-hetnet-BS}
We consider now the typical SU located at the origin under $\E_{SU}^0$. Using the similar arguments as 
in the proof of Theorem~\ref{theorem:mean-number-of-SU-served-by-a-macro-BS}, we obtain:

\begin{eqnarray}\label{eqn:basic-expression_number-of-statis-users-served-by-the-hetnet-BS}
 \overline{N}_{SU, het}&=&1+\lambda_{SU} \E[|V_{het}(0)|]\nonumber\\
&=&1+\lambda_{SU} \frac{\E _{het}^0 [|V_{het}(0)|^2] }{ \E _{het}^0[|V_{het}(0)|]   }\,,
\end{eqnarray}

where $\E _{het}^0$ denotes the expectation under 
the Palm probability given that there is one (macro or micro) BS at the origin.
Let $\E _{macro}^0$ (resp., $\E _{micro}^0$) be the expectation under the Palm probability distribution given that there is 
one macro (resp., micro)  BS at the origin. Note that, the  
fraction of the macro base stations is $p$, and the rest of base 
stations are micro base stations. Using this fact and using similar arguments as in the proof of 
Theorem~\ref{theorem:mean-number-of-SU-served-by-a-macro-BS}, we can write:

\footnotesize
 \begin{eqnarray}
 && \E _{het}^0 [|V_{het}(0)|^2] \nonumber\\
 &=& p \int_{(x_1,x_2) \in \mathbb{R}^2} \int_{(y_1,y_2) \in \mathbb{R}^2} \nonumber\\
 && \Pro _{macro}^0\{ (x_1,x_2) \in V_{het}(0),  (y_1,y_2) \in V_{het}(0) \} dx_1 dx_2 dy_1 dy_2 \nonumber\\
 && + (1-p)   \int_{(x_1,x_2) \in \mathbb{R}^2} \int_{(y_1,y_2) \in \mathbb{R}^2} \nonumber\\
 && \Pro _{micro}^0\{ (x_1,x_2) \in V_{het}(0),  (y_1,y_2) \in V_{het}(0) \} dx_1 dx_2 dy_1 dy_2 \nonumber\\
&=& p   \int_{(x_1,x_2) \in \mathbb{R}^2} \int_{(y_1,y_2) \in \mathbb{R}^2}  e^{-\lambda_{BS}p\mathcal{A}((x_1,x_2),(y_1,y_2))} \nonumber\\
&&  e^{-\lambda_{BS}(1-p) \mathcal{B}((x_1,x_2),(y_1,y_2))  }  dx_1 dx_2 dy_1 dy_2 \nonumber\\
 && + (1-p) \int_{(x_1,x_2) \in \mathbb{R}^2} \int_{(y_1,y_2) \in \mathbb{R}^2} e^{-\lambda_{BS}(1-p)\mathcal{A}((x_1,x_2),(y_1,y_2))} \nonumber\\
 && e^{-\lambda_{BS}p\mathcal{D}((x_1,x_2),(y_1,y_2))}   dx_1 dx_2 dy_1 dy_2 \label{eqn:second-expression_number-of-statis-users-served-by-the-hetnet-BS}
 \end{eqnarray}
 \normalsize
 
 Now, note that $\E _{het}^0[|V_{het}(0)|]=1/\lambda_{BS}$. This, combined with 
 (\ref{eqn:basic-expression_number-of-statis-users-served-by-the-hetnet-BS}) and 
 (\ref{eqn:second-expression_number-of-statis-users-served-by-the-hetnet-BS}) proves the theorem.

\subsection{Proof of Theorem~\ref{theorem:approximation-of-mobile-users-served-by-hetnet-BS}} 
\label{subsection:proof-of-approximation-of-mobile-users-served-by-hetnet-BS}
As in the proof of Theorem~\ref{theorem:approximation-of-static-users-served-by-macro-BS}, 
$\Pro\{V_{het}(X^*)=V_{het}(0)\} = \frac{pP_{macro}^{2/\beta}}{pP_{macro}^{2/\beta}+(1-p)P_{micro}^{2/\beta}}$. 
The mean volume of the 
zero macro cell (the macro cell containing the origin) is $\E ( | V_{macro}(0)| )=\frac{1.2802}{\lambda_{BS}p}$ as 
in the proof of 
Theorem~\ref{theorem:mean-number-of-MU-served-by-a-macro-BS}, and 
$\E ( |\mathcal{U}_{mobile} ( V_{macro}(0) ) | )=\lambda_{MU} \lambda_L \E ( | V_{macro}(0)| )$. Hence, the proof follows.

 \subsection{Proof of Theorem~\ref{theorem:approximation-of-static-users-served-by-hetnet-BS}} 
 \label{subsection:proof-of-approximation-of-static-users-served-by-hetnet-BS}
 As in the proof of Theorem~\ref{theorem:approximation-of-static-users-served-by-macro-BS}, 
$\Pro\{V_{het}(X^*)=V_{het}(0)\} = \frac{pP_{macro}^{2/\beta}}{pP_{macro}^{2/\beta}+(1-p)P_{micro}^{2/\beta}}$. Now, 
$\Pro\{V_{het}(X^*) \neq V_{het}(0)\}=1-\Pro\{V_{het}(X^*) = V_{het}(0)\}$. 

Similar to the proof of Theorem~\ref{theorem:approximation-of-static-users-served-by-macro-BS}, 
the mean area of a hetnet cell served by a typical macro BS is given by  
$\E_{macro}^0[|V_{het}(0)|]=\frac{pP_{macro}^{2/\beta}}{pP_{macro}^{2/\beta}+(1-p)P_{micro}^{2/\beta}} \times \frac{1}{\lambda_{BS}p}$, 
and similarly $\E_{micro}^0[|V_{het}(0)|]=\frac{(1-p)P_{micro}^{2/\beta}}{pP_{macro}^{2/\beta}+(1-p)P_{micro}^{2/\beta}} \times \frac{1}{\lambda_{BS}(1-p)}$.  

Combining the above results, we prove the theorem.

\subsection{Proof of Lemma~\ref{lemma:linear-combination-convex-increasing-in-xi}}
\label{subsection:proof-of-linear-combination-convex-increasing-in-xi}
Note that, for given values of $p$, $P_{micro}$ and $P_{macro}$, the function $(r_{MU}+\xi r_{SU})$ is an affine, increasing function of $\xi$. 
This proves the lemma since pointwise supremum of affine, increasing functions is convex, increasing.

\subsection{Proof of Lemma~\ref{lemma:linear-combination-optimization-rate-of-mobile-user-decreases-in-xi}}
\label{subsection:proof-of-linear-combination-optimization-rate-of-mobile-user-decreases-in-xi}
Consider any $\kappa>0$. By optimality of $p^*(\xi)$, $P_{micro}^*(\xi)$ and $P_{macro}^*(\xi)$, we obtain:

\footnotesize
\begin{eqnarray*}
&& r_{MU}(p^*(\xi),P_{micro}^*(\xi),P_{macro}^*(\xi)) \\
&& +\xi r_{SU}(p^*(\xi),P_{micro}^*(\xi),P_{macro}^*(\xi)) \\
&\geq& r_{MU}(p^*(\xi+\kappa),P_{micro}^*(\xi+\kappa),P_{macro}^*(\xi+\kappa)) \\
&& +\xi r_{SU}(p^*(\xi+\kappa),P_{micro}^*(\xi+\kappa),P_{macro}^*(\xi+\kappa))
\end{eqnarray*}
\normalsize
and
 \footnotesize
\begin{eqnarray*}
&&r_{MU}(p^*(\xi+\kappa),P_{micro}^*(\xi+\kappa),P_{macro}^*(\xi+\kappa)) \\
&& +(\xi+\kappa) r_{SU}(p^*(\xi+\kappa),P_{micro}^*(\xi+\kappa),P_{macro}^*(\xi+\kappa))\\
&\geq& r_{MU}(p^*(\xi),P_{micro}^*(\xi),P_{macro}^*(\xi)) \\
&& +(\xi+\kappa) r_{SU}(p^*(\xi),P_{micro}^*(\xi),P_{macro}^*(\xi))
\end{eqnarray*}
\normalsize

Adding the above inequalities and cancelling common terms, we obtain 
$r_{SU}(p^*(\xi+\kappa),P_{micro}^*(\xi+\kappa),P_{macro}^*(\xi+\kappa)) \geq r_{SU}(p^*(\xi),P_{micro}^*(\xi),P_{macro}^*(\xi))$, i.e., 
$r_{SU}^*(\xi)$ is increasing in $\xi$. We can prove the other part in a similar way.

\end{document}